\newcommand{\R}{\mathbb{R}}
\newcommand{\N}{\mathbb{N}}
\newcommand{\btheta}{\boldsymbol{\theta}}
\newcommand{\bgamma}{\boldsymbol{\gamma}}
\newcommand{\bepsilon}{\boldsymbol{\varepsilon}}
\newcommand{\by}{\boldsymbol{y}}
\newcommand{\bx}{\boldsymbol{x}}
\newcommand{\bw}{\boldsymbol{w}}
\newcommand{\bI}{\boldsymbol{I}}
\newcommand{\bc}{\boldsymbol{c}}
\newcommand{\vektor}[1]{\begin{pmatrix} #1 \end{pmatrix}}
\DeclareMathOperator{\var}{Var}
\DeclareMathOperator{\tr}{tr}
\theoremstyle{remark}
\newtheorem{lemma}{Lemma}
\newtheorem{proposition}[lemma]{Proposition}
\begin{document}

\def\spacingset#1{\renewcommand{\baselinestretch}%
	{#1}\small\normalsize} \spacingset{1}


\title{\bf Simultaneous inference for misaligned multivariate functional data}
\author{Niels Lundtorp Olsen, Bo Markussen and Lars Lau Raket\\
	Department of Mathematical Sciences, University of Copenhagen}
\maketitle

\begin{abstract}
	We consider inference for misaligned multivariate functional data that represents the same underlying curve, but where the functional samples have systematic differences in shape. In this paper we introduce a class of generally applicable models where warping effects are modeled through nonlinear transformation of latent Gaussian variables and systematic shape differences are modeled by Gaussian processes. To model cross-covariance between sample coordinates we propose a class of low-dimensional cross-covariance structures suitable for modeling multivariate functional data. We present a method for doing maximum-likelihood estimation in the models and apply the method to three data sets. The first data set is from a motion tracking system where the spatial positions of a large number of body-markers are tracked in three-dimensions over time. The second data set consists of longitudinal height and weight measurements for Danish boys. The third data set consists of three-dimensional spatial hand paths from a controlled obstacle-avoidance experiment. We use the developed method to estimate the cross-covariance structure, and use a classification set-up to demonstrate that the method outperforms state-of-the-art methods for handling misaligned curve data.
\end{abstract}

\noindent%
{\it Keywords:}  functional data analysis, curve alignment, nonlinear mixed-effects models, template estimation
\vfill

\newpage
\spacingset{1.45} 

\section{Introduction}
While the literature and available methods for statistical analysis of univariate functional data have been rapidly increasing during the last two decades, multivariate functional data has been a largely overlooked topic. Extension of univariate methodology to multivariate functional data is often considered a trivial task, but is rarely done in practice. As a result, the non-trivial parts of extending methodology, such as temporal modeling of cross-covariance or warping of misaligned multidimensional signals, have only received little attention.

A wide range of methods for aligning curves are available. For general reviews of the literature on curve alignment, we refer to  \cite{Ramsay, kneip2008combining}, and \cite{wang2015review}. Curve alignment is a nonlinear problem, so for the vast majority of methods, one can not generally expect to align data in a globally optimal way. In the multitude of available methods for univariate functional data, the quality of the results obtained with the available implementations is very variable. Often, good implementations of simple methods outperform far more advanced methods with less polished implementations, even if the advanced methods should be more suitable to the data at hand. From the perspective of multivariate functional data, a major issue is that only very few methods with publicly available implementations support alignment of multivariate curves. 

While misaligned multivariate functional data have been underrepresented in the statistics literature, similar problems have had a central role in other fields. Analysis of misaligned curves in multiple dimensions is fundamental in the shape analysis literature \citep{younes1998computable, sebastian2003aligning, manay2006integral}, where for example closed planar shapes can be thought of as functions $f\,:\,[0,1]\rightarrow \R^2$ with $f(0)=f(1)$. In much shape data, one do not observe the parametrization of these functions, and for closed shapes the start and end points ($0$ and $1$) of the parametrization are arbitrary in terms of the observed data. As an example, consider data consisting of cells outlines obtained from 2D images that have been manually annotated. Here the first annotated point on a cell does not bear any significance---in fact the orientation of the cell is most likely completely random in the image. For this reason, a fundamental direction of theory in the shape analysis literature is built around invariance to parametrization of the function \citep{younes1998computable} as well as other classical shape invariances such as translation, scaling and rotation \citep{kendall1989survey, dryden1998statistical}.

In recent years, the idea of using invariances similar to the shape analysis literature has been introduced as a general tool to analyze functional data \citep{vantini2012definition}. The most notable class of methods are based on elastic distances for functional data analysis  \citep{srivastava2011shape, kurtek2012statistical, tucker2013generative, srivastava_bog}. The fundamental idea underlying these methods is to represent data in terms of  square-root velocity functions and take advantage of the invariance properties of distance on the associated function space, in particular that distances are not affected by warping of the domain in the observed representation. An elastic distance between two curves $f_1$ and $f_2$ can be defined as the minimal distance between the square-root velocity functions associated to $f_1$ and $f_2\circ v$ where the minimum is taken over all possible warps $v$ of $f_2$ (in the original representation). This approach has proven very successful compared to many conventional approaches, and efficient high-quality implementations for various data types and types of analyses are available \citep{FSUsoft, fdasrvf}.

The vast majority of available methods for handling misaligned functional data are heuristic in the sense that they are based on some choice of data similarity measure that is typically not chosen because it fits well with important characteristics of the data. Rather, the typical rationale is computational convenience and/or incremental improvements over other methods. In the shape literature, methods are perhaps less heuristic and more idealistic, in the sense that they are derived from principles of how a distance between shapes should ideally be. This ideal behaviour is typically specified through invariance properties such as the ones described above. In contrast to these approaches for handling misalignment, we propose a full simultaneous statistical model for the fundamental types of variation in misaligned multivariate curves. In particular, we propose to treat amplitude variation and warping variation equally by modeling them as random effects on their respective domains. 

Only few works have previously considered the idea of simultaneously modeling amplitude and warping as random effects. An early example of an integrated statistical model that modeled curve shifts as random Gaussian effects is presented in \cite{ronn}. The simultaneous inference in the model allows data-driven regularization of the magnitude of the shifts through the estimated variance parameters. The idea has been extended to more general warping functions that are modeled by polynomials \citep{Gervini, RonnSkovgaard}, and lately also to include serially correlated noise within the observations of an individual curve \citep{RaketSommerMarkussen}. In addition to the data-driven regularization of the predicted random effects achieved through estimation of variance parameters, the use of likelihood-based inference naturally relate the discrete observation points and the underlying continuous model. This relation avoids many common issues that arise when developing methods for continuous data in the form of pre-smoothed curves. In particular, the pinching problem, where areas with large deviations are compressed by warping to minimize the integrated residual, does not exist for these methods.
Furthermore, the simultaneous modeling of amplitude and warping effects introduces an explicit maximum likelihood criterion for resolving the identifiability problems related to separating warp and amplitude effects \citep{marron2015functional}. The maximum-likelihood estimates induce a separation of the two effects, namely the most likely given the variation observed in the data.

A related class of models with random affine transformations of both warping and amplitude variation have become popular in growth curve analysis \citep{beath2007infant, cole2010sitar}. \cite{hadjipantelis2014analysis, hadjipantelis2015unifying} provide an extension to this in term of a simultaneous mixed-effects model for the scores in separate functional principal component analyses of the amplitude and the warping effects. The simultaneous model allows not only for cross-correlation within the amplitude and warping scores, but also across these two modes of variation. The estimation procedure used in \cite{hadjipantelis2014analysis, hadjipantelis2015unifying}, however, relies on a pre-alignment of the curves that separates the vertical and the horizontal variation.

The major contribution of this paper is a new class of multivariate models that both eliminates the need for pre-smoothing and -alignment of samples and also allows for estimation of cross-correlation between the coordinates of the amplitude effect. In the proposed framework, even if we do not assume any cross-correlation of the amplitude effects, the prediction of warping functions will still take the full multivariate sample into account, and the alignment will thus typically be superior to alignment of the individual coordinates.

\section{Modeling and inference for misaligned multivariate functional data}
\begin{figure}[!th]
	\centering
	\includegraphics[width = \textwidth, trim = 200 80 150 100, clip]{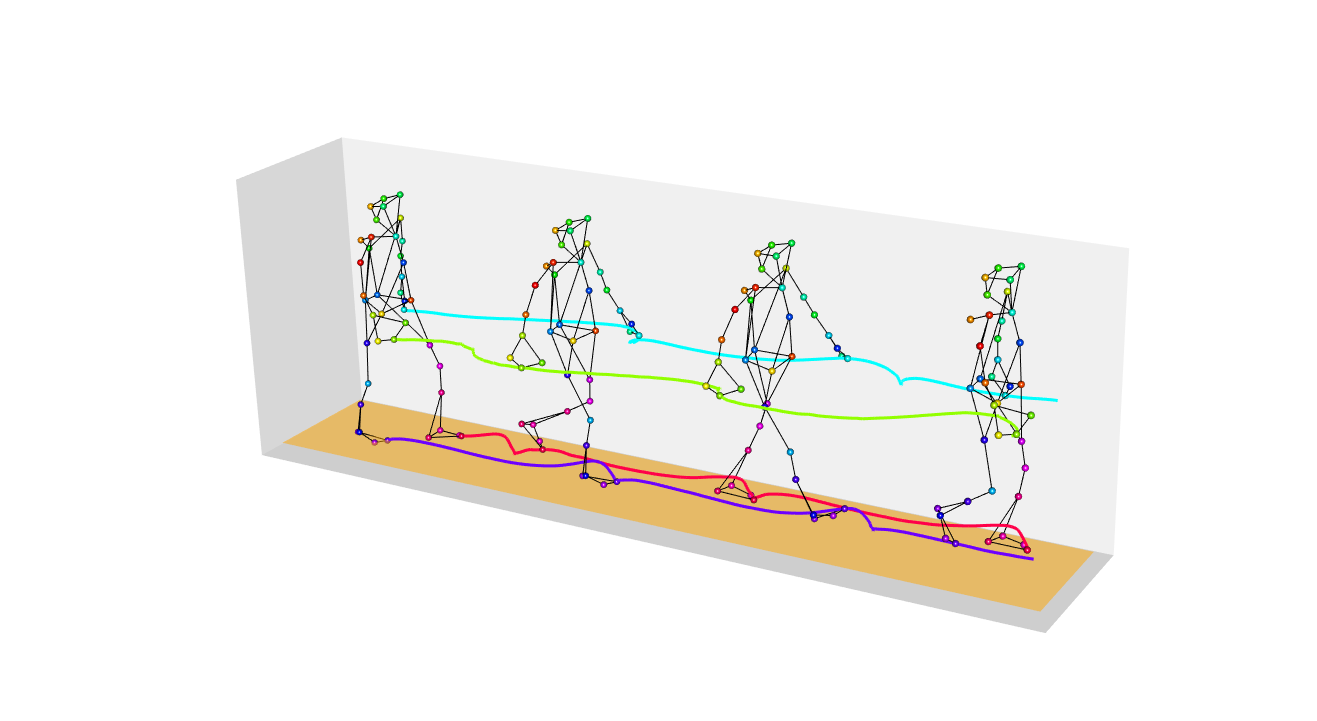}
	\caption{Data from a motion tracking system where the spatial positions of 41 physical markers are tracked in three-dimensions over time. A skeleton model based on the markers is displayed at four temporally equidistant points. The three-dimensional paths of hand and foot markers are displayed.}\label{fig:marker_motivation}
\end{figure}

Consider the multivariate functional observation in Figure~\ref{fig:marker_motivation}. The figure displays a walking sequence in three-dimensional space of a person equipped with 41 markers from the 
\cite{cmu}. The observation is a curve in $\R^{123}$ recorded at 301 time points with a total of 36,963 observed values (20 marker positions missing due to occlusion). 

This sample illustrates some of the challenges in analyzing multivariate functional data. Firstly, a repetition of the walking cycle would in all likelihood produce a trajectory that is visually very similar to the sample, but it would differ in two aspects, the movement timing and the movement path would be slightly different. Such differences in timing and path are random perturbations around the person's ideal walking cycle.  A natural model for such data is thus a nonlinear mixed-effects model where movement timing is modeled as a random effect whose effect is only observed through the nonlinear transformation of the movement path as a function of time, and the movement path variation is modeled as a stochastic process in $\R^{123}$. However, the very large number of observations in a single functional sample puts strong restrictions on the types of models that can be used.  For example, the covariance matrix between the 41 markers at a single time point is $123\times123$, which in practice makes the problem of estimating a single unstructured covariance ($7626$ parameters) impossible. 

\begin{figure}[!t]
	\centering
	\includegraphics[width = 0.65\textwidth]{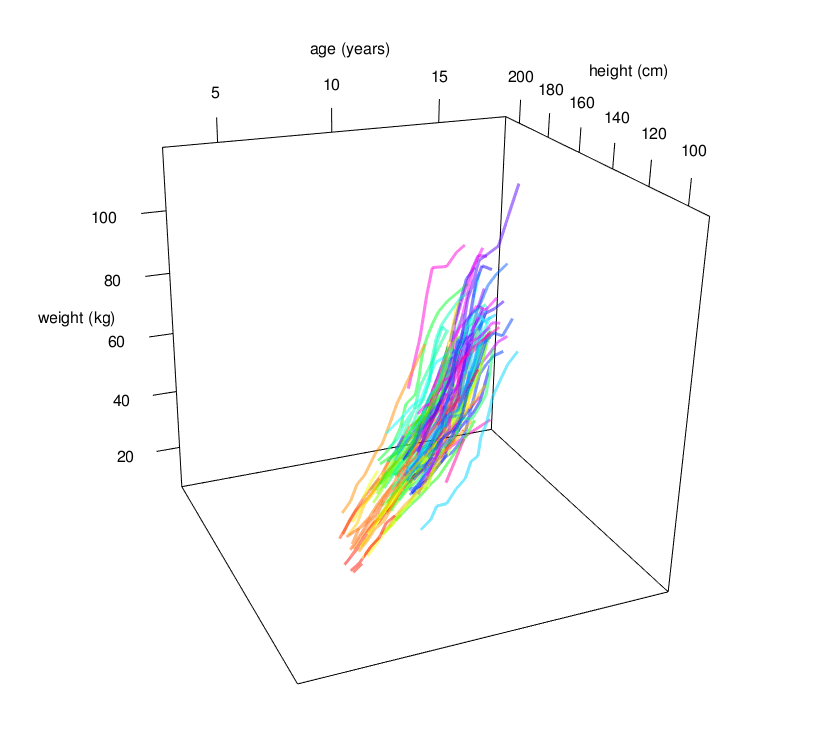}
	
	\caption{Height and weight measurements over time for 106 healthy boys from the Copenhagen Puberty Study. Each individual curve indicates a subject.}
	\label{fig:hw_motivation}
\end{figure}

Another example of multivariate functional data is longitudinal measurements of children's height and weight. Figure~\ref{fig:hw_motivation} displays such data from the Copenhagen Puberty Study \citep{aksglaede2009recent, sorensen2010recent}. The data reflects the fact that height and weight are generally increasing functions during childhood and adolescence. Again, there will be a nonlinear timing effect; observed age is a proxy for a biological or developmental age process of the child, and there will be systematic differences in observation values; taller and heavier children tend to stay taller and heavier than their peers. For height and weight data, one would typically have few observations per child, but the possibility of many children. Thus, the cross-covariance at a given time point could easily be estimated, and one could have a natural interest in inferring possible changes in the correlation between height and weight over time.  

The two above examples illustrate that the challenges of multivariate functional data can be very different. In the following we will introduce a class of models to analyze functional data containing both warp and amplitude variation. To make the model sufficiently flexible, we will introduce generic models for random warping functions and dynamic cross-correlation structures that can approximate arbitrary structures, and whose resolution of approximation can be coarsened by reducing the number of free parameters. 

\subsection{Statistical model}
We consider a set of $N$ discrete observations of $q$-dimensional curves $\by_1,\dots, \by_N\colon [0,1] \rightarrow \R^q$ from $J$ subjects. The curves are assumed to be generated according to the following model
\begin{align}
\by_n(t) = \btheta_{f(n)} (v_n(t)) + \bx_n(t), \quad n=1,\dotsc,N. \label{eq:mod1}
\end{align}
Here $f\colon \{1,\dotsc,N\} \rightarrow \{1,\dotsc,J\}$ is a known function that maps sample number to subject number. The unknown fixed effects are subject specific mean value functions $\btheta_j\colon [0,1] \rightarrow \R^q$ for $j=1,\dotsc,J$ that are modeled using a spline basis assumed to be continuously differentiable. Typical choices are B-spline bases and Fourier bases. The phase variation is modeled by random warping functions $v_n=v(\cdot,\bw_n)\colon [0,1] \rightarrow [0,1]$, which are parametrized by independent latent zero-mean Gaussian variables $\bw_n \in \R^{m_{\bw}}$ for $n=1,\dotsc,N$ with a common covariance matrix $\sigma^2 C$. Here  $v\colon [0,1] \times \R^{m_{\bw}} \to [0,1]$ is a pre-specified function, that is assumed to be continuously differentiable in its second argument, and $m_{\bw} \in \N$ is the dimension of the latent variable.
The amplitude variation is modeled by independent zero-mean Gaussian processes  $\bx_n\colon [0,1] \rightarrow \R^q$ for $n=1,\dotsc,N$ with a common covariance function $\sigma^2 \mathcal{S}$. The unknown variance parameters are thus a scalar $\sigma^2 >0$, a positive definite matrix $C \in \R^{m_{\bw} \times m_{\bw}}$, and a positive definite function $\mathcal{S}\colon [0,1] \times [0,1] \to \R^{q \times q}$. In sections~\ref{sec:warp_models} and \ref{sec:amp_models} we  discuss models for the warping functions and the cross-covariance of the amplitude variation that are highly expressive, while the number of parameters to be estimated is kept at a moderate level. 

We assume that the $n$th curve is observed at $m_n \in \N$ prefixed time points $t_{nk}$, which neither need to be equally spaced in time nor to be shared by the $N$ samples. Stacking the $m_n$ temporally discrete observations into a vector we have
\begin{align}
\vec{\by}_n = \{ \by_n(t_{nk}) + \bepsilon_{nk} \}_{k=1}^{m_n} \in \R^{q m_n}, \quad n=1,\dotsc,N, \label{eq:mod1b}
\end{align}
where the observation noise is given by independent zero-mean Gaussian variables $\bepsilon_{nk} \in \R^q$ with a common variance $\sigma^2 \bI_q$. Here $\bI_q \in \R^{q \times q}$ denotes the identity matrix. 

The major structural difference of model~\eqref{eq:mod1} compared to conventional functional mixed-effects models \citep{Guo} is the inclusion of a warping effect. When compared to conventional methods for curve alignment, the proposed model differs by having a random amplitude effect, by modeling warping functions as random effects, and by handling all effects simultaneously.

\subsection{Modeling warping functions} \label{sec:warp_models}
\begin{figure}[!tp] 
	\centering
	\includegraphics[width = 0.4\textwidth]{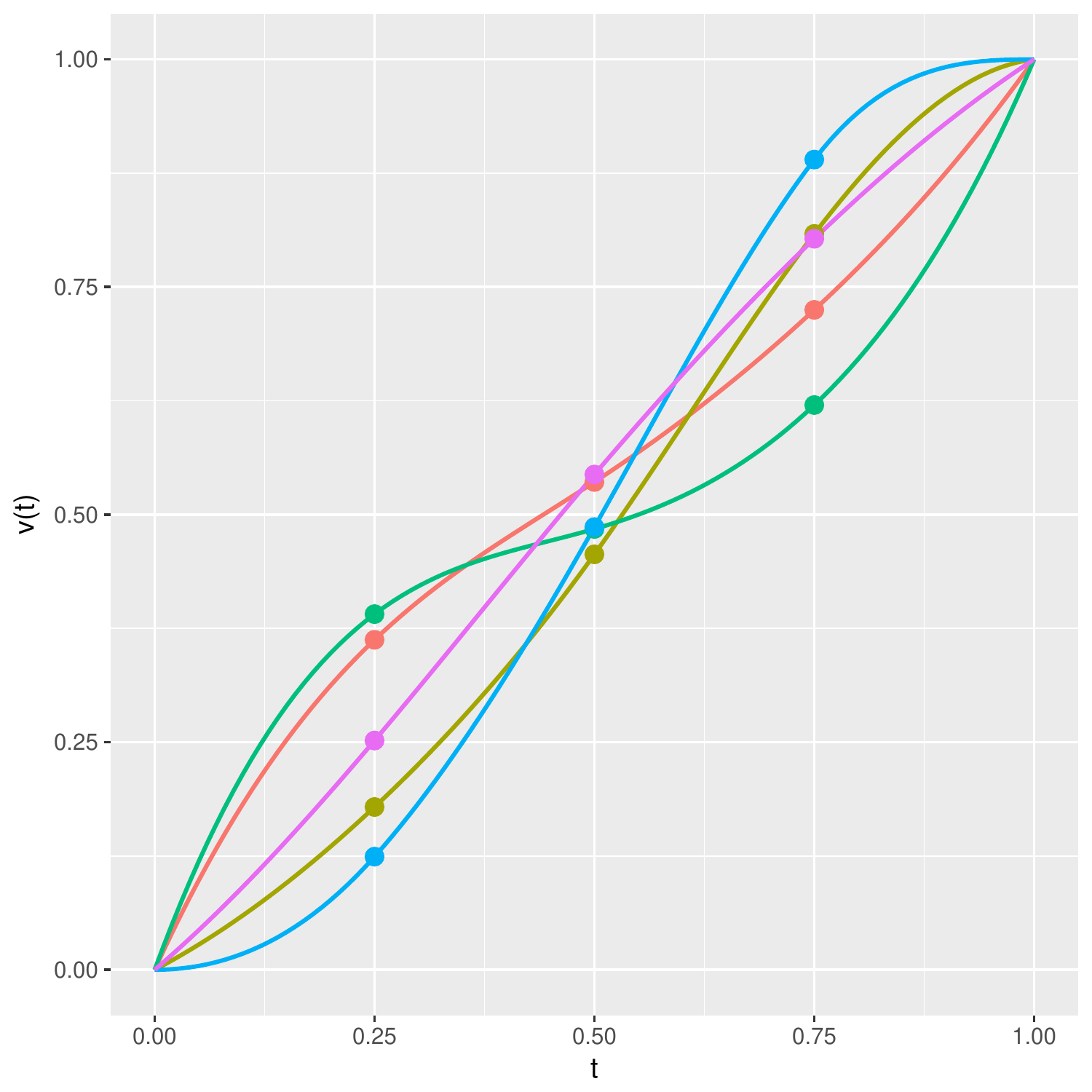}
	\caption{Simulated warping functions with the covariance given by~\eqref{eq:slow_cov}. The warp values at the three interior anchor points are marked by points.} \label{fig:warp_interpolation_slow}
\end{figure}

\begin{figure}[!tp] 
	\centering
	\subfloat[Unit-drift Brownian motion]{\includegraphics[width = 0.4\textwidth]{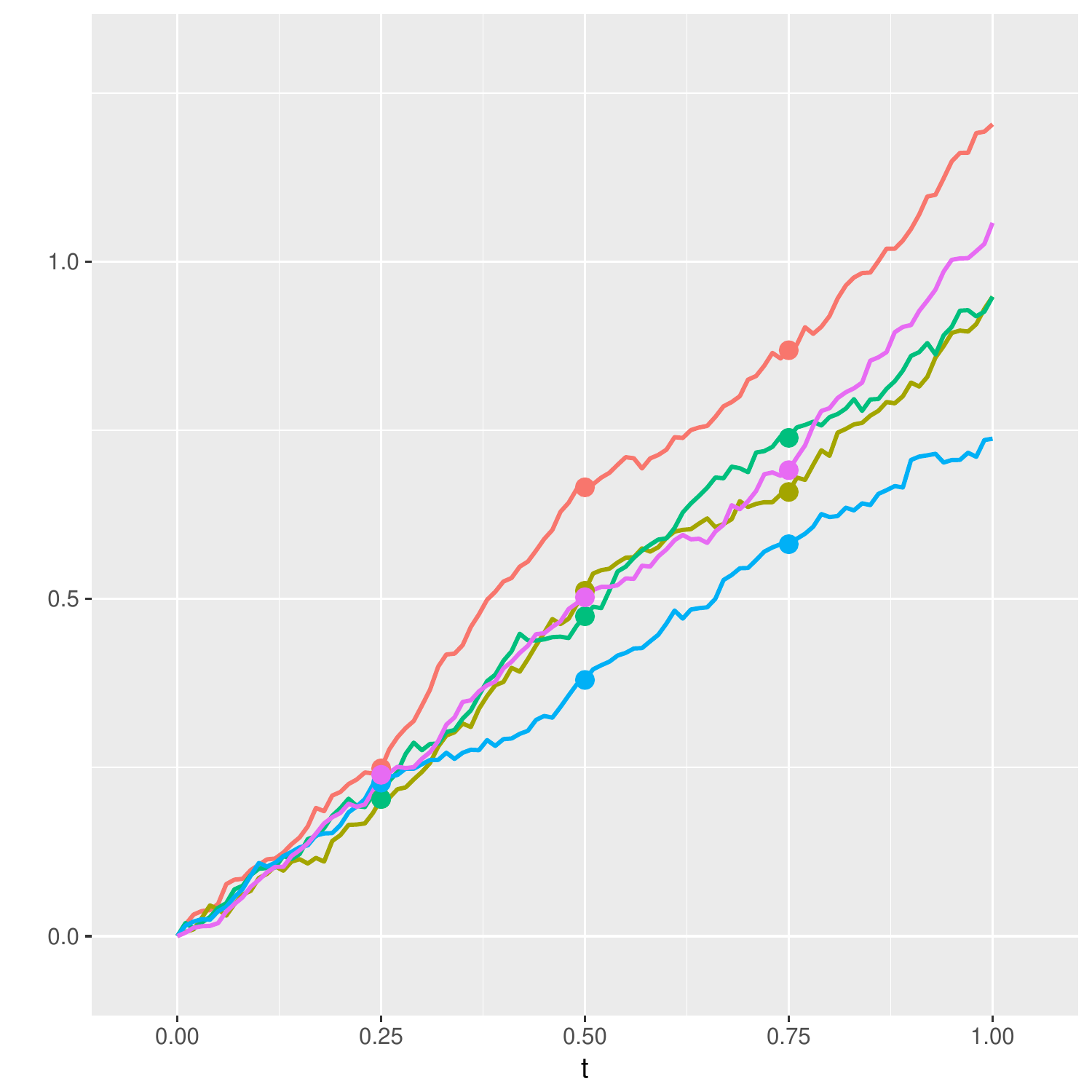}}
	\subfloat[Warping functions corresponding to (a)]{\includegraphics[width = 0.4\textwidth]{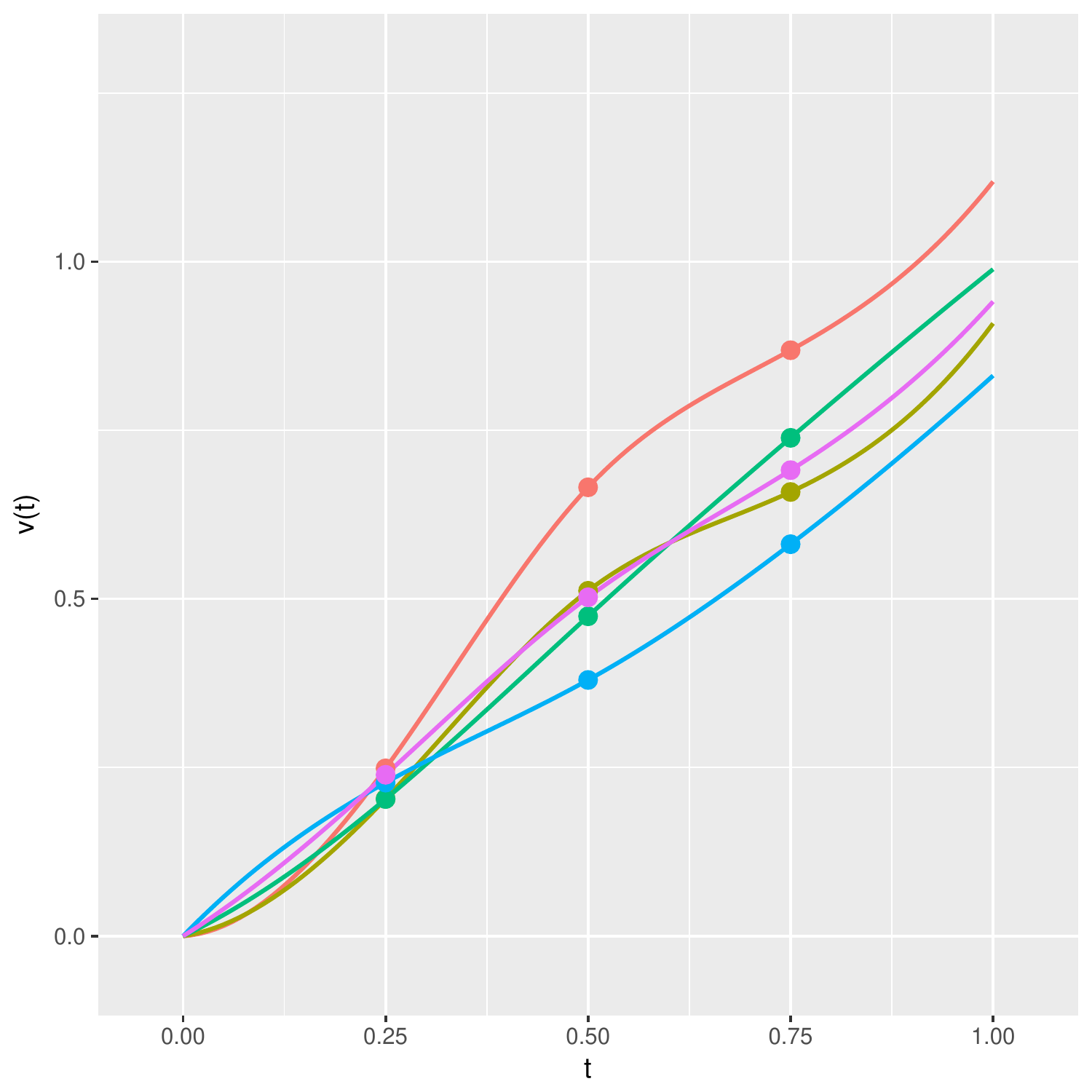}}
	
	\subfloat[Unit-drift Brownian bridge]{\includegraphics[width = 0.4\textwidth]{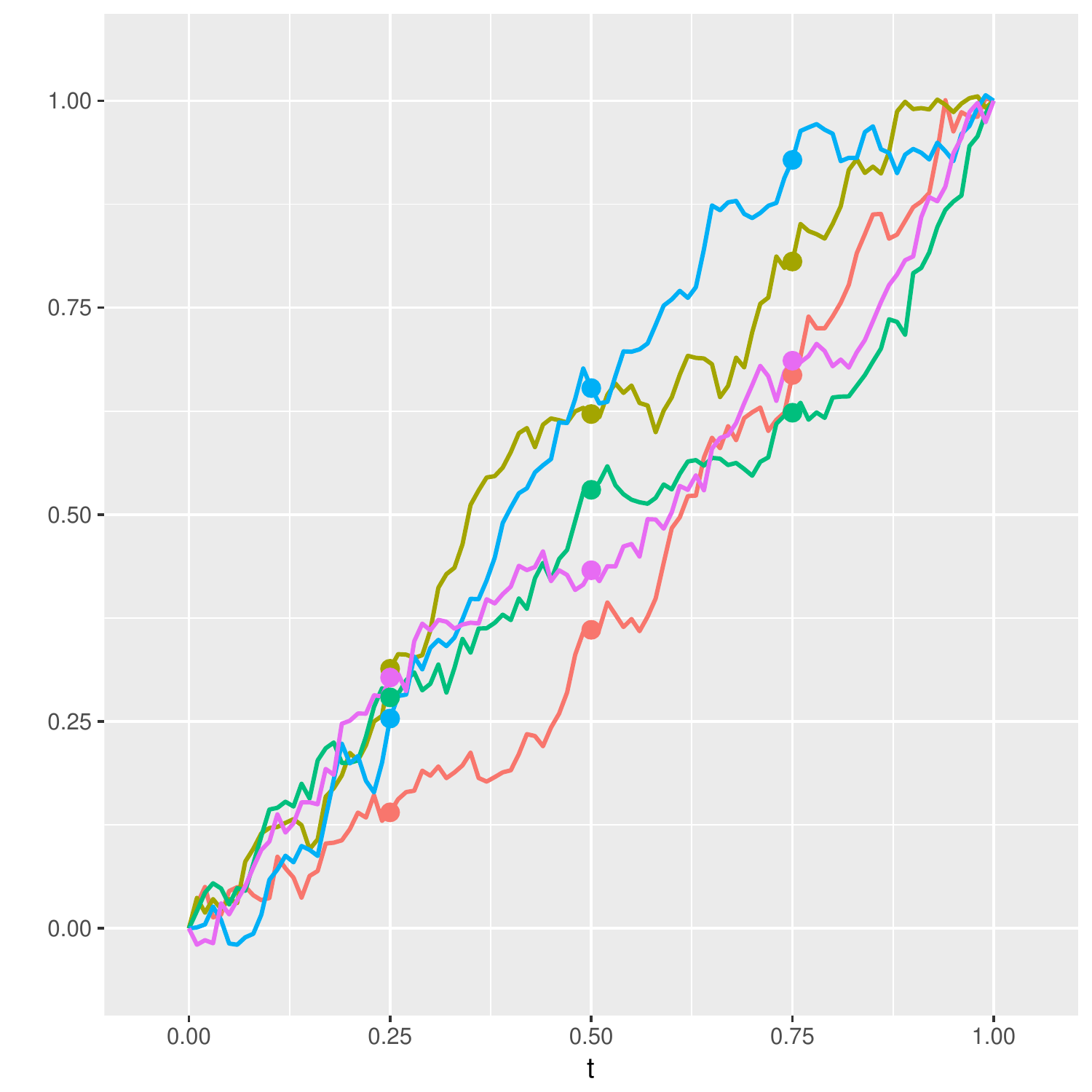}}
	\subfloat[Warping functions corresponding to (c)]{\includegraphics[width = 0.4\textwidth]{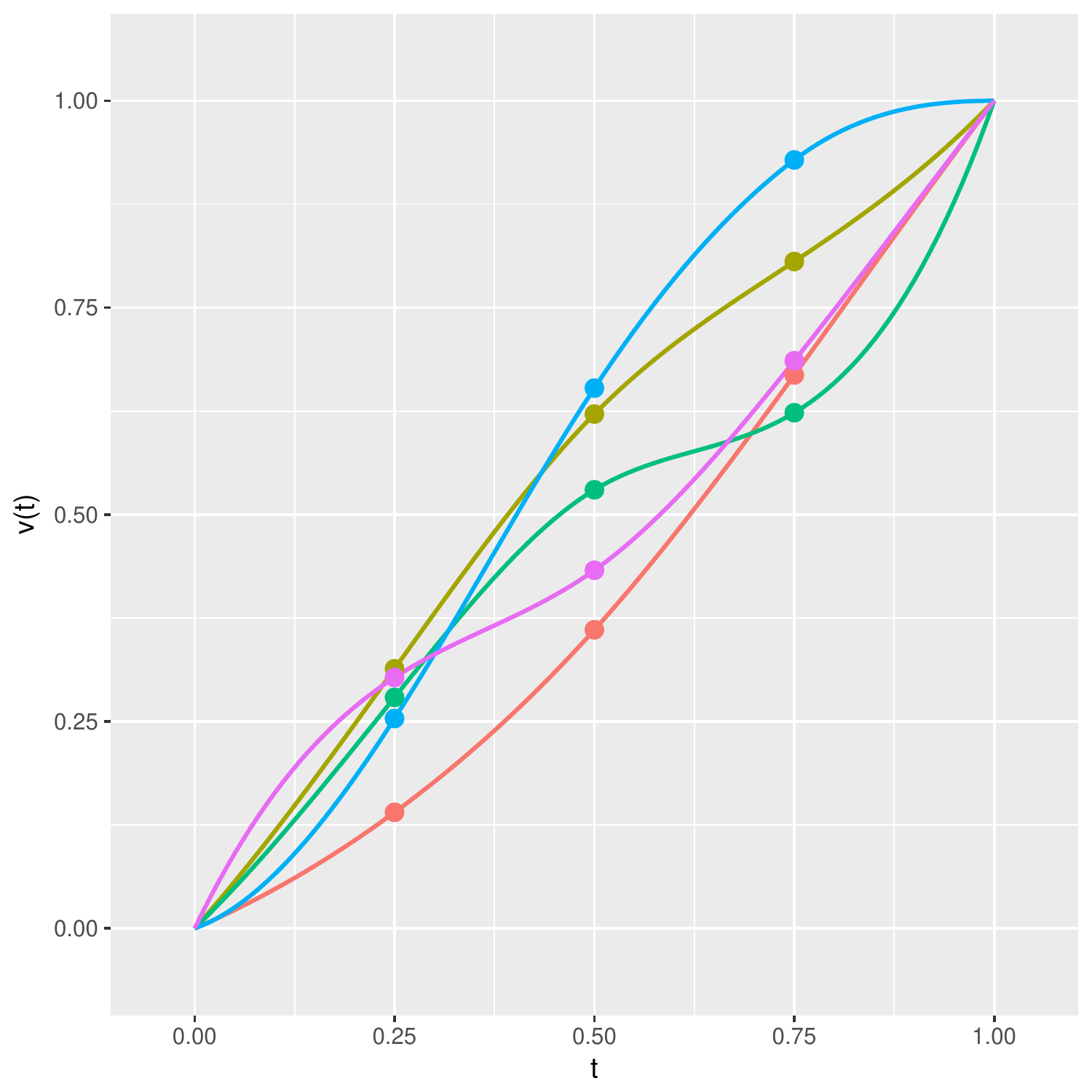}}
	\caption{Constructions of warping functions from stochastic processes with parametric covariances. (a) simulated trajectories of a unit-drift Brownian motion with scale 0.1, (b) warping functions using a unit-drift Brownian motion model with $m_{\bw}=3$ interior equidistant anchor points, fixed interpolation at the left boundary and extrapolation of the rightmost deviation at the right endpoint, (c) simulated trajectories of a unit-drift Brownian bridge with scale  0.2, (d) warping functions using the unit-drift Brownian bridge model with $m_{\bw}=3$ interior equidistant anchor points and fixed interpolation at the boundary.} \label{fig:warp_interpolation}
\end{figure}
The success of the model relies on its ability to approximate the realizations of the true warping functions. To accomplish this, the warping functions $v_n$ must be sufficiently versatile and able to approximate a large array of different warps. We achieve this by modeling warping functions as the identity mapping plus a deformation modeled by interpolating latent warp variables $\bw_n \in \R^{m_{\bw}}$ at pre-specified (e.g. equidistant) anchor points $t_k$ for $k=1,\dotsc,m_{\bw}$
\begin{align}
v_n(t) = v(t, \bw_n) = t + \mathcal{E}_{\bw_n}(t),\label{eq:warpfct}
\end{align}
where the interpolation function $\mathcal{E}_{\bw}$ can, for example, be a linear or a cubic spline. 

The behavior of the predicted warping functions will be determined by the combination of interpolation method (and corresponding boundary conditions) and the estimated covariance of the latent variables $\bw_n$. Throughout this paper we will use cubic spline interpolation of the latent variables. If we think of the parametrization of the $n$th sample, $v_n(t)$, as the internal time of the sample, it is often natural to assume that the internal time is always moving forward. To ensure this, we will predict the latent variables $\bw_n$ using constrained optimization such that the sequence will be increasing along the corresponding anchor points. But for cubic interpolation, a sequence of increasing values at the interpolation points is not sufficient to ensure a monotone interpolation function. To force increasing warping functions we will use the Hyman filter \citep{hyman1983accurate} to ensure that the entire warping function is increasing. For some types of data, it may be meaningful to have warps that can go backwards in time, or it may be useful to include this option to account for uncertainty in the model if the observed signals contain features where the matching is highly ambiguous. Such types of warp models will not be considered in this paper.

The covariance matrix of the latent variables will determine the regularity of the predicted warping functions. When the number of latent variables $m_{\bw}$ is small compared to the number of functional samples $N$ and the number of sampling points $m_1, \dots, m_N$ for the functional samples, one can assume an unstructured covariance and estimate the corresponding $(m_{\bw}^2 + m_{\bw}) / 2$ variance parameters. If the structure of the warping functions are of key interest, one may be able to study the underlying mechanism by estimating an unstructured covariance matrix. Consider for example the simulated warping functions shown in Figure~\ref{fig:warp_interpolation_slow}. These warping functions use the increasing cubic spline construction detailed above with $m_{\bw}=3$ interior equidistant anchor points, fixed boundary points and covariance matrix 
\begin{align}
\begin{pmatrix}0.005 & 0 & -0.004\\ 0 & 0.001 & 0\\ -0.004 & 0 & 0.005\end{pmatrix}.\label{eq:slow_cov}
\end{align}
The interpretation of the strong negative covariance between first and third anchor point suggest a burnout type of process where samples that are ahead initially slow down toward the end and vice versa. The low variance of the middle anchor point suggest that the individual samples are largely synchronized around this time. 

In many cases, one can choose a specific interpolation method and specify a reasonable parametric covariance for the latent variables based on properties of the data.  It is, for example, often natural to think of warping processes as accumulations of small errors causing desynchronization of the set of observed trajectories that all started in the same state. Thinking of Gaussian processes, Brownian motion with linear unit drift would offer a simple model for phenomena where errors are accumulating and increasing the desynchronization of samples over time. Simulations of unit-drift Brownian motions are shown in Figure~\ref{fig:warp_interpolation} (a) and the corresponding simulations of warping functions from $m_{\bw}=3$ interior equidistant anchor points, fixed left boundary point and linear extrapolation of the deviation of the rightmost anchor point at the right boundary point are shown in Figure~\ref{fig:warp_interpolation} (b). 

Suppose we are analyzing longitudinal data of children's heights where we could think of the warping function as the developmental (height) age of the child. At conception (approximately $-9$ months of age), where the child is merely a fertilized egg, all children are the size of a grain of sand and their developmental ages are synchronized. As the children become older the desynchronization of their developmental ages increases. This can, for example, be seen by the vast variation between the age of onset of puberty. The unit-drift Brownian motion warp model seems like a very suitable model for this desynchronization. 

Other types of data may give rise to other models. Consider an experiment that records repetitions of a walking sequence such as the data in Figure~\ref{fig:marker_motivation}, and assume that all sequences start from the same pose and end after two completed gait cycles.
For such data, the desynchronization is not increasing over time since beginning and end poses are synchronized, but we would expect maximum desynchronization around the middle of the gait cycle window. In this setting, a more suitable model would be a unit-drift Brownian bridge as illustrated in Figure~\ref{fig:warp_interpolation} (c) and (d).

	Like other hyperparameters, the number of anchor points is a choice of modelling. However, a low number of anchor points (e.g. 3-5) will generate a class of warp functions that is sufficiently flexible for many applications; we used $m_w = 3$ in all applications presented in this paper. If, however, local variation is very strong and complex and the observed functional samples carry sufficiently clear information about the systematic shapes to recover such complex warps, a higher number of anchor points should be used.

\subsection{Dynamic covariance structures} \label{sec:amp_models}
In the previous section we modeled the covariance structure of smooth warping functions and saw how one could use domain-specific knowledge of the data to choose models with few parameters. Even though the nature of the additive amplitude variation components $\bx_n$ from model~\eqref{eq:mod1} is different, we can extend these ideas to construct parametric, low-dimensional cross-covariance structures that are sufficiently expressive to model a wide array of cross-covariance structures over time. 

\begin{proposition}  \label{prop-dyn-cov}
	Let $f: [0,1] \times [0,1] \rightarrow \R_+$ be a positive definite function on the temporal domain $[0,1]$. Let $0= t_1 < \dots < t_{\ell} = 1$ be anchor points, let $A_1, \dots, A_\ell \in \R^{q \times q}$ be a set of symmetric positive definite matrices, and for each $t \in [0,1]$ define $B_t \in \R^{q \times q}$ as the unique positive definite matrix satisfying 
	\begin{align}
	B_t^\top B_t =   \frac{t_{k+1} - t}{t_{k+1} -t_k} A_k  + \frac{t - t_k}{t_{k+1} -t_k} A_{k+1}  \quad \text{ for } t \in [t_{k}, t_{k+1}].
	\label{eq:dyn_cov}
	\end{align}
	For all $s,t \in [0,1]$, define $K(s,t) = f(s,t) B_s^\top B_t \in \R^{q \times q}$. Then the function $K: [0,1] \times [0,1] \rightarrow \R^{q \times q}$ is positive definite.
\end{proposition}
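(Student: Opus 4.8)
The plan is to verify directly the defining property of a matrix-valued positive definite kernel: for an arbitrary finite collection of evaluation points $s_1,\dots,s_n \in [0,1]$ and arbitrary test vectors $\bc_1,\dots,\bc_n \in \R^q$, I would show that the quadratic form $\sum_{i,j=1}^n \bc_i^\top K(s_i,s_j)\,\bc_j$ is nonnegative. Before that, I would record that each $B_t$ is well defined: for $t\in[t_k,t_{k+1}]$ the right-hand side of~\eqref{eq:dyn_cov} is a convex combination of the positive definite matrices $A_k$ and $A_{k+1}$ (the two coefficients are nonnegative and sum to one), hence is itself symmetric positive definite, and every such matrix possesses a unique positive definite square root, which is the $B_t$ of the statement.

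The key step is a substitution that exposes the Gram structure hidden in the factor $B_s^\top B_t$. Setting $\boldsymbol{u}_i := B_{s_i}\bc_i \in \R^q$ and inserting $K(s_i,s_j)=f(s_i,s_j)\,B_{s_i}^\top B_{s_j}$, the quadratic form collapses to
\begin{align*}
\sum_{i,j=1}^n \bc_i^\top K(s_i,s_j)\,\bc_j = \sum_{i,j=1}^n f(s_i,s_j)\,(B_{s_i}\bc_i)^\top(B_{s_j}\bc_j) = \sum_{i,j=1}^n f(s_i,s_j)\,\boldsymbol{u}_i^\top \boldsymbol{u}_j,
\end{align*}
so the matrix factors have been absorbed into the vectors $\boldsymbol{u}_i$ and only a scalar expression weighted by the values of $f$ remains.

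To finish, I would split the inner products coordinatewise. Writing $u_{i,a}$ for the $a$-th component of $\boldsymbol{u}_i$ and $F=(f(s_i,s_j))_{i,j}\in\R^{n\times n}$ for the Gram matrix of $f$ at the chosen points, one has
\begin{align*}
\sum_{i,j=1}^n f(s_i,s_j)\,\boldsymbol{u}_i^\top \boldsymbol{u}_j = \sum_{a=1}^q \sum_{i,j=1}^n f(s_i,s_j)\,u_{i,a}u_{j,a} = \sum_{a=1}^q \boldsymbol{v}_a^\top F\,\boldsymbol{v}_a,
\end{align*}
where $\boldsymbol{v}_a=(u_{1,a},\dots,u_{n,a})^\top$. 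Since $f$ is positive definite, $F$ is positive semidefinite, so each term $\boldsymbol{v}_a^\top F\,\boldsymbol{v}_a$ is nonnegative and hence so is the whole sum, which is exactly the claim.

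I expect no serious obstacle: the argument is in essence the Schur product theorem for the product of the scalar kernel $f$ with the matrix-valued kernel $(s,t)\mapsto B_s^\top B_t$, the latter being positive definite precisely because of its factored form. The only points that require care are the preliminary well-definedness of $B_t$ and, if strict (rather than merely nonnegative) definiteness of $K$ is intended, the observation that the substitution $\bc_i\mapsto \boldsymbol{u}_i=B_{s_i}\bc_i$ is injective since each $B_{s_i}$ is invertible as a positive definite matrix; combined with strict positive definiteness of $F$ this forces the form to vanish only when all $\bc_i=0$. The conceptual crux is simply recognizing that $B_s^\top B_t$ carries a Gram structure, so the matrix square roots can be transferred onto the test vectors.
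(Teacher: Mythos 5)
Your proposal is correct and follows essentially the same route as the paper's proof: both transfer the square-root factors onto the test vectors (your substitution $\boldsymbol{u}_i = B_{s_i}\bc_i$ is exactly the paper's factorization $\mathbf{V} = \mathbf{B}^\top(\mathbf{F}\otimes \mathbf{I}_q)\mathbf{B}$ with $\mathbf{B}$ block-diagonal and invertible) and then reduce the claim to positive definiteness of the Gram matrix $F$ of the scalar kernel $f$. Your coordinatewise split over $a=1,\dots,q$ is just the explicit form of the Kronecker structure $\mathbf{F}\otimes\mathbf{I}_q$, and your closing remark on strictness (injectivity of $\bc_i\mapsto B_{s_i}\bc_i$) is precisely how the paper obtains the strict inequality $z^\top\mathbf{V}z>0$.
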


\begin{proof}
	First we remark that since the space of positive definite  matrices is a convex cone, the linear interpolation $B_t^\top B_t$ is also positive definite, and we may take $B_t$ as the positive square root. 
	To prove that $K: [0,1] \times [0,1] \rightarrow \R^{q \times q}$ is positive definite it suffices to show that the associated finite dimensional marginal matrices are positive definite. Thus, given $s_1,\dotsc,s_m\in [0,1]$ we let the block matrix $\mathbf{V} \in \R^{qm \times qm}$ be defined by
	\begin{equation}
	\mathbf{V} = \vektor{B_{s_1}^\top f(s_1, s_1) B_{s_1} &  B_{s_1}^\top f(s_1, s_2) B_{s_2} & \cdots &  B_{s_1}^\top f(s_1, s_m) B_{s_m} \\ 
		B_{s_2}^\top f(s_2, s_1) B_{s_1} & B_{s_2}^\top f(s_2, s_2) B_{s_2} & \dots &  \\
		\vdots & & \ddots & \\
		B_{s_m}^\top f(s_m, s_1) B_{s_1} & B_{s_m}^\top f(s_m, s_2) B_{s_2} & \dots & B_{s_m}^\top f(s_m, s_m) B_{s_m}}.
	\end{equation}
	By straightforward calculations we have $\mathbf{V} = 
	\mathbf{B}^\top (\mathbf{F} \otimes \mathbf{I}_q) \mathbf{B}$, where $\mathbf{B} \in \R^{qm \times qm}$ is the block-diagonal matrix of $\{B_{s_1}, \dots , B_{s_m} \}$ and 
	\begin{equation}
	\mathbf{F} =  \vektor{f(s_1, s_1) & \cdots & f(s_1, s_m) \\
		\vdots & \ddots & \vdots \\ f(s_m, s_1) & \cdots & f(s_m, s_m)}.
	\end{equation}
	For $z \in \R^{qm} \setminus \{0\}$ we must show that $z^\top \mathbf{V} z > 0$. Setting $u = \mathbf{B} z \neq 0$ and using that $\mathbf{F}$ is positive definite by assumption we have $z^\top \mathbf{V} z = u^\top (\mathbf{F} \otimes \mathbf{I}_q) u > 0$.
\end{proof}
The above proposition gives a general framework for constructing dynamical covariance functions, and it is simple to construct parametric models that allow for estimation of time-varying cross-correlations in a statistical setting. In the statement of the proposition we assumed a common marginal covariance function $f$ along all coordinates. The idea of modeling a cross-covariance structure by linearly interpolating cross-covariances at specific points seamlessly extends to multivariate diagonal covariance functions (i.e. no cross-covariances), such that the individual coordinates of the functional samples may be modeled using different types covariance functions or different parameters.

\section{Estimation} \label{sec:estimation}

Direct likelihood inference in the model \eqref{eq:mod1} is not feasible as the model contains nonlinear latent variables in combination with possible very large data sizes. Instead we propose a maximum-likelihood estimation procedure based on iterative local linearization \citep{LindstromBates}. The procedure is a multivariate extension of the estimation procedure described in \cite{RaketSommerMarkussen}, however with an improved estimation of fixed effects.

The estimation procedure consists of alternating steps of (1); estimating fixed effects (i.e. spline coefficents) and predicting the most likely warp variables given the data and current parameter estimates, (2); estimating variance parameters from the locally linearized likelihood function around the maximum a posteriori predictions  $\bw_1^0,\dots,\bw_N^0$ of the warp variables. The linearization in the latent Gaussian warp parameters $\bw_1,\dots, \bw_N$ means that we approximate the nonlinearly transformed probability density by the density of a linear combination of multivariate Gaussian variables. The estimation procedure is thus a Laplace approximation of the likelihood, and the quality of the approximation is approximately second order \citep{wolfinger1993laplace}.

\paragraph{Predicting warps}
In the first step of the estimation procedure we want to predict the most likely warps from model~\eqref{eq:mod1} given the current parameter estimates. The negative log posterior for a single functional sample is proportional to
\begin{equation}
(\vec{\bgamma}_{\bw_n} - \vec{\by}_n)^\top (\bI_{qm_n} + S_n)^{-1}(\vec{\bgamma}_{\bw_n} - \vec{\by}_n) + \bw_n^\top C^{-1}\bw_n \label{post-lik}
\end{equation}
where $\vec{\bgamma}_{\bw_n} \in \R^{qm_n}$ is the stacked vector $\{ \btheta_{f(n)}(v(t_{nk}, \bw_n)) \}_{k=1}^{m_n}$ and $S_n \in \R^{qm_n \times qm_n}$ is the amplitude covariance $\{ \mathcal{S}(t_{nj},t_{nk}) \}_{j,k=1,\dotsc,m_n}$ at the sample points. The issue of predicting warps is thus a nonlinear least squares problem that can be solved by conventional methods.

\paragraph{Estimating variance parameters}
Since $\btheta_{f(n)}\circ v(t_{nk},\cdot)$ are smooth functions for all $n=1,\dotsc,N$, $k=1,\dotsc,m_n$ we can linearize  model~\eqref{eq:mod1} around a given prediction $\bw_n^0$ using the first-order Taylor expansion. The linearization is given by
\begin{equation}
\btheta_{f(n)}(v(t_{nk} , \bw_n)) \approx \btheta_{f(n)}(v(t_{nk}, \bw_n^0)) +  \partial_t \btheta_{f(n)}(v(t_{nk}, \bw_n^0))(\nabla_w v(t_{nk}, \bw_n^0))^\top (\bw_n - \bw_n^0).
\end{equation}

For the discrete observation of the $n$th curve this gives a linearization of model~\eqref{eq:mod1} as a vectorized linear mixed-effects model on the form 
\begin{equation} \label{lin-mod}
\vec{\by}_n \approx \vec{\bgamma}_{\bw_n^0} +  Z_n(\bw_n - \bw_n^0) + \vec{\bx}_n + \vec{\bepsilon}_n, \quad
n=1,\dotsc,N, 
\end{equation}
where $\vec{\bgamma}_{\bw_n^0}, \vec{\bx}_n, \vec{\bepsilon}_n \in \R^{qm_n}$ are the stacked vectors
\begin{align*}
\vec{\bgamma}_{\bw_n^0} &= \{ \btheta_{f(n)}(v(t_{nk}, \bw_n^0)) \}_{k=1}^{m_n}, &
\vec{\bx}_n &= \{ \bx_n(t_{nk}) \}_{k=1}^{m_n}, &
\vec{\bepsilon}_n &= \{ \bepsilon_{nk} \}_{k=1}^{m_n},
\end{align*}
and $Z_n \in \R^{qm_n \times m_{\bw}}$ is the row-wise stacked matrix
\begin{equation*}
Z_n = \{ \partial_t  \btheta_{f(n)}(v(t_{nk}, \bw_n^0)) \nabla_{\bw} v(t_{nk}, \bw_n^0) \}_{k=1}^{m_n}.
\end{equation*}

In the approximative model~\eqref{lin-mod} twice the negative profile log-likelihood $l(\sigma^2, C, \mathcal{S})$ for the variance parameters is given by
\begin{equation} \label{linearized-like}
\sum_{n=1}^N \bigg( q m_n \log \sigma^2 + \log\det V_n \\
+ \sigma^{-2}(\vec{\by}_n -  \vec{\bgamma}_{\bw_n^0} + Z_n \bw_n^0)^\top V_n^{-1} (\vec{\by}_n -  \vec{\bgamma}_{\bw_n^0} +  Z_n \bw_n^0 ) \bigg),
\end{equation}
where $V_n = Z_n C Z_n^\top + S_n + \bI_{qm_n}$ with $S_n = \{ \mathcal{S}(t_{nj},t_{nk}) \}_{j,k=1,\dotsc,m_n}$. In particular, the profile maximum-likelihood estimate for $\sigma^2$ is given by 
\begin{equation*}
\hat\sigma^2 = \frac{1}{q m}\sum_{n=1}^N (\vec{\by}_n -  \vec{\bgamma}_{\bw_n^0} +   Z_n \bw_n^0 )^\top V_n^{-1} (\vec{\by}_n -  \vec{\bgamma}_{\bw_n^0} +  Z_n \bw_n^0 )
\end{equation*}
where $m = \sum_{n=1}^N m_n$ is the total number of observations. Estimation of the variance parameters $C$ and $\mathcal{S}$ related to the warping and amplitude effects is done using the profile likelihood $l(\hat\sigma^2, C, \mathcal{S})$.

\paragraph{Estimating fixed effects}
As the fixed effects are given by  spline bases, estimation of these can be handled within the framework of linear Gaussian models, remembering that basis functions should be evaluated at warped time points $v_n(t_{nk})$. Since $v_n(t_{nk}) = v(t_{nk},\bw_n)$ is not fixed up front, we are required to recalculate the spline basis matrix for each new prediction of $\bw_n$. This estimation improves that of \cite{RaketSommerMarkussen}, which used a point-wise estimation based on the inverse warp that ignored the amplitude variance of the curves.

There is no closed-form expression for the maximum-likelihood estimator of the fixed effects in the linearized model, since spline coefficients also enter the variance terms through the matrices $Z_n$, as can be seen in equation \eqref{linearized-like}. However, by construction $Z_n$ is linear in the spline coefficients so estimation can be done using an EM algorithm. The details of these calculations can be found in the supplementary material. 

In practice, the estimation in the linearized model can be approximated by estimating from the posterior likelihood \eqref{post-lik} which gives a computationally efficient closed-form solution. The  difference between these two approaches is that the EM algorithm takes the uncertainty in prediction of $\bw_n$ into account and is guaranteed to decrease the linearized likelihood \eqref{linearized-like}. However, for a moderate number of warp parameters, there should only be a small conditional variance on $\bw_n$. 

In the data applications presented in the following sections, we estimated fixed effects from the posterior likelihood. In the last application on hand movements, these posterior likelihood estimates were used to initialized the likelihood optimization which were subsequently fine-tuned by the EM algorithm with a single update per warp prediction. This was done to evaluate if improved likelihood estimates could be obtained, but the EM algorithm offered only a very slight improvement in linearized likelihood. 

\section{Applications}

\begin{figure}[!htp]
	\centering
	\includegraphics[width = 0.55\textwidth, trim = 110 130 120 200, clip]{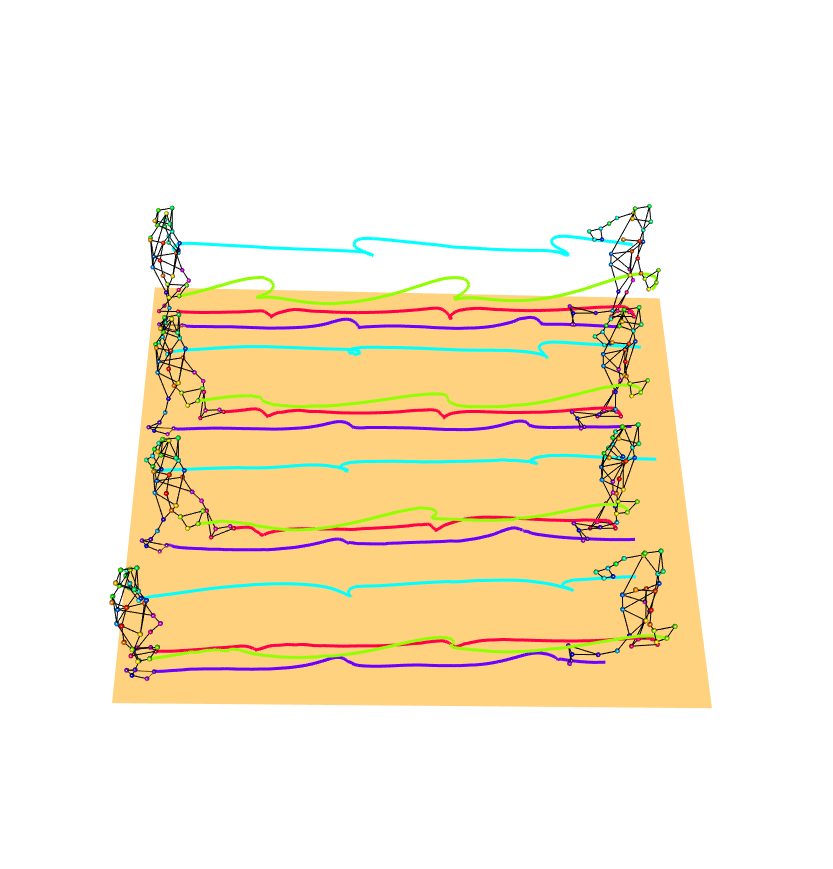}
	\includegraphics[width = 0.55\textwidth, trim = 130 130 120 200, clip]{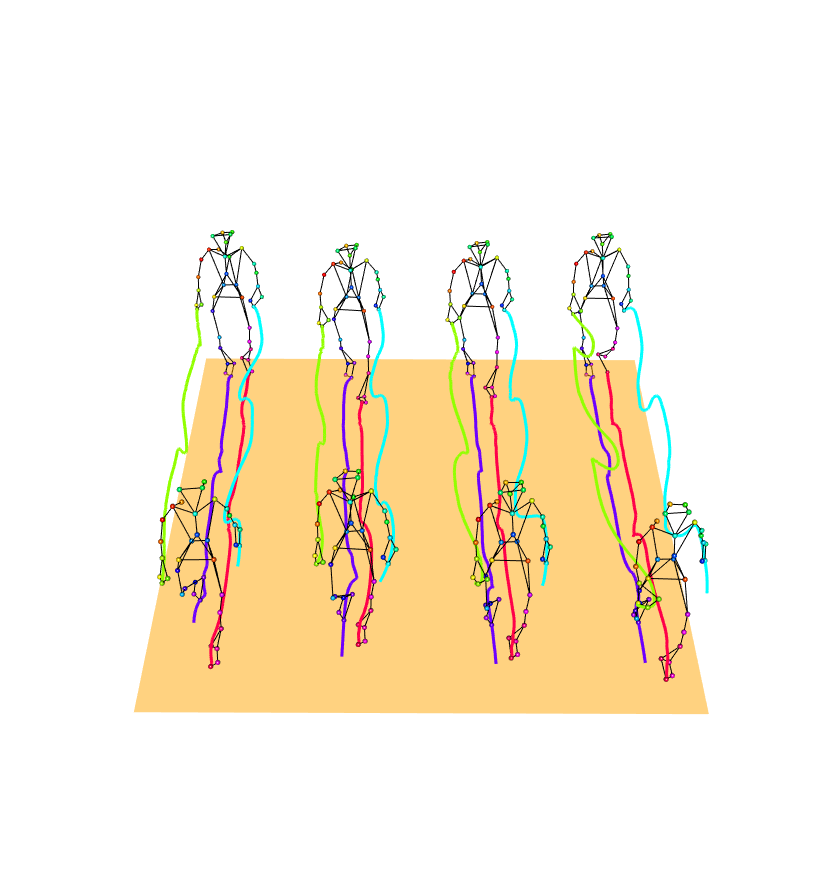}
	
	\caption{Side and frontal view of the motion trajectories of four walking sequences performed by the same participant.}\label{fig:mocap1}
	
\end{figure}
\subsection{Motion capture data}

\paragraph{Data and model} Data consists of four 12-dimensional functional objects. The curves consist of a total of 1284 temporal observations in $\R^{12}$. 
As can be seen in Figure~\ref{fig:mocap1}, the trajectories start and end at different places during the gait cycle. To handle this structure, time was scaled to the interval $[0, 1]$ such that all samples began at $0.1$, and such that the temporally longest trajectory ended at $0.9$. We included random shift parameters $s_n$ in our warping functions to model these different temporal onsets of the gait cycle. The shifts $s_n$ were modeled as Gaussian random variables. The full model is
\begin{equation}
\by_n(t) = \btheta(v(t,\bw_n, s_n)) + \bx_n(t)
\end{equation}
where $\btheta: [0,1] \rightarrow \R^{12}$ is the mean curve for the observations (modeled using a 3-dimensional B-spline basis with 30 interior anchor points) and the warping function $v$ is given by
\begin{equation*}
v(t,\bw_n, s_n) = t + s_n + \mathcal{E}_{\bw_n}(t)
\end{equation*}
where $\mathcal{E}_{\bw_n}$ is an increasing cubic spline interpolation (Hyman filtered) of $\bw_n$ at $m_{\bw}=3$ equidistant anchor points.
No subject-specific effects were included as all responses were recorded from the same individual. The amplitude effect $\bx_n$ was modeled as a Gaussian process with a Mat\'ern covariance $f_\text{Mat\'{e}rn(2,$\kappa$)}(s,t)$ with second order smoothness, assuming independent coordinates and a common range parameter $\kappa$ (see equation~\eqref{cov:matern} in the supplement). We assumed different scaling parameters for each of the 12 coordinates of $\bx_n$. Since the data is roughly cut to include two gait cycles, one would expect high synchronization of start and end poses in percentual time when corrected for the different onsets. Therefore, latent variables $\bw_n$ were modeled as discretely observed Brownian bridges with a single scale parameter. 

\paragraph{Results}
The predicted warping functions are shown in Figure~\ref{fig:mcd-warp}, and the corresponding aligned samples are shown in Figure~\ref{fig:mcd-kurver}. The samples are nicely aligned, in particular, the regular elevation profiles of the left and right feet seems very well aligned. The remaining signals have their key-features aligned, with the residual variation evenly spread out across the coordinates. This is a feature of the simultaneous multivariate fitting, where the best alignment given the variation in the different coordinates is found. Individual alignment of the coordinates would produce warping functions that overfitted the individual aspects of the movement.  
In Figure~\ref{fig:mcd-med-estimater}, we have displayed the estimated mean trajectories $\btheta$ and illustrated the uncertainty after alignment by 95\% prediction ellipsoids for the amplitude effect $\bx_n$.

\begin{figure}[!tp]
	\centering
	\includegraphics[width = 0.8\textwidth]{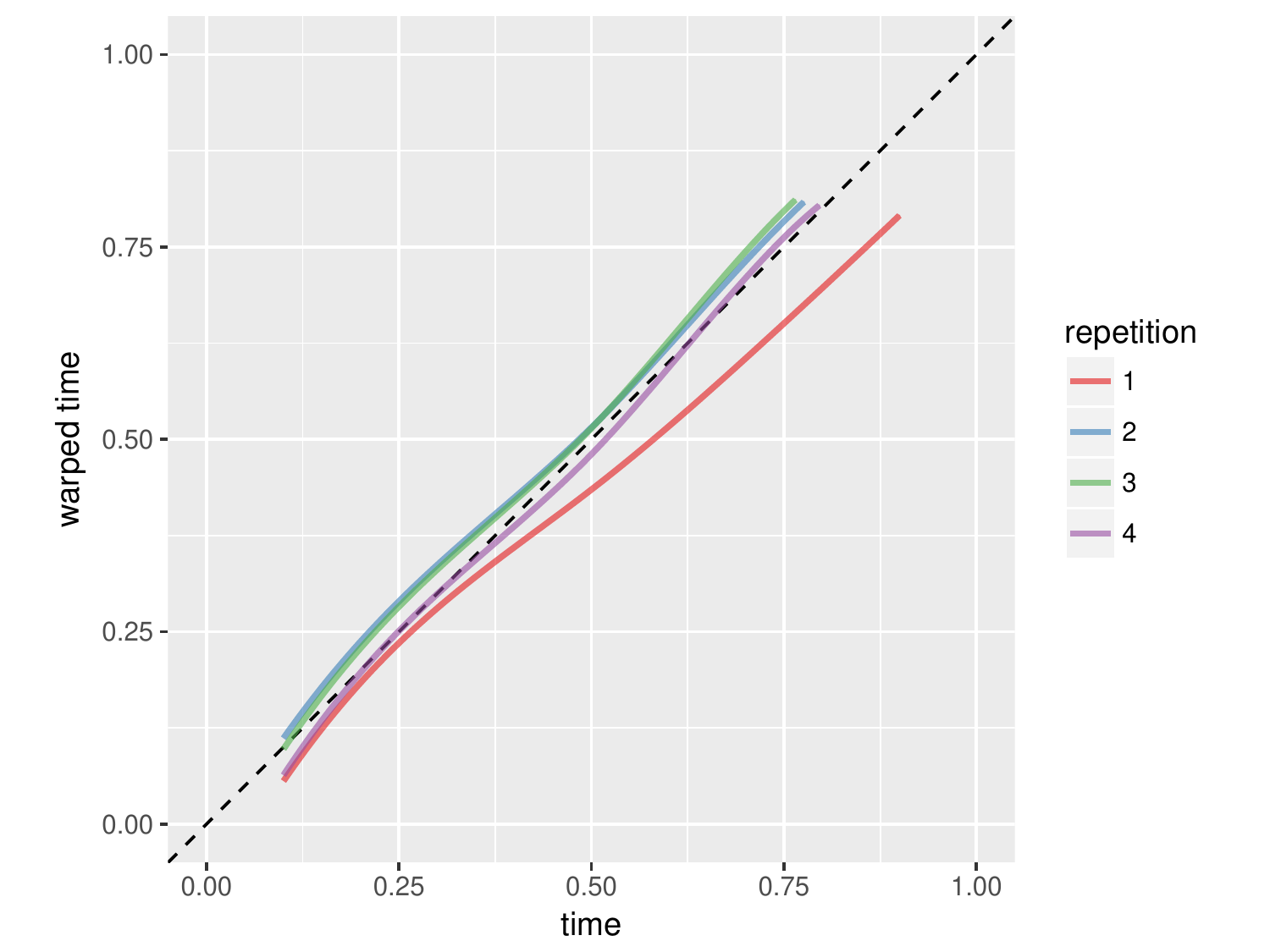}
	\caption{Predicted warping functions for the motion capture data}\label{fig:mcd-warp}
\end{figure}

\begin{figure}[!tp] 
	\centering
	\makebox[0.42\textwidth]{\textsf{observed}}\makebox[0.42\textwidth]{\textsf{aligned}}
	\includegraphics[width = 0.42\textwidth]{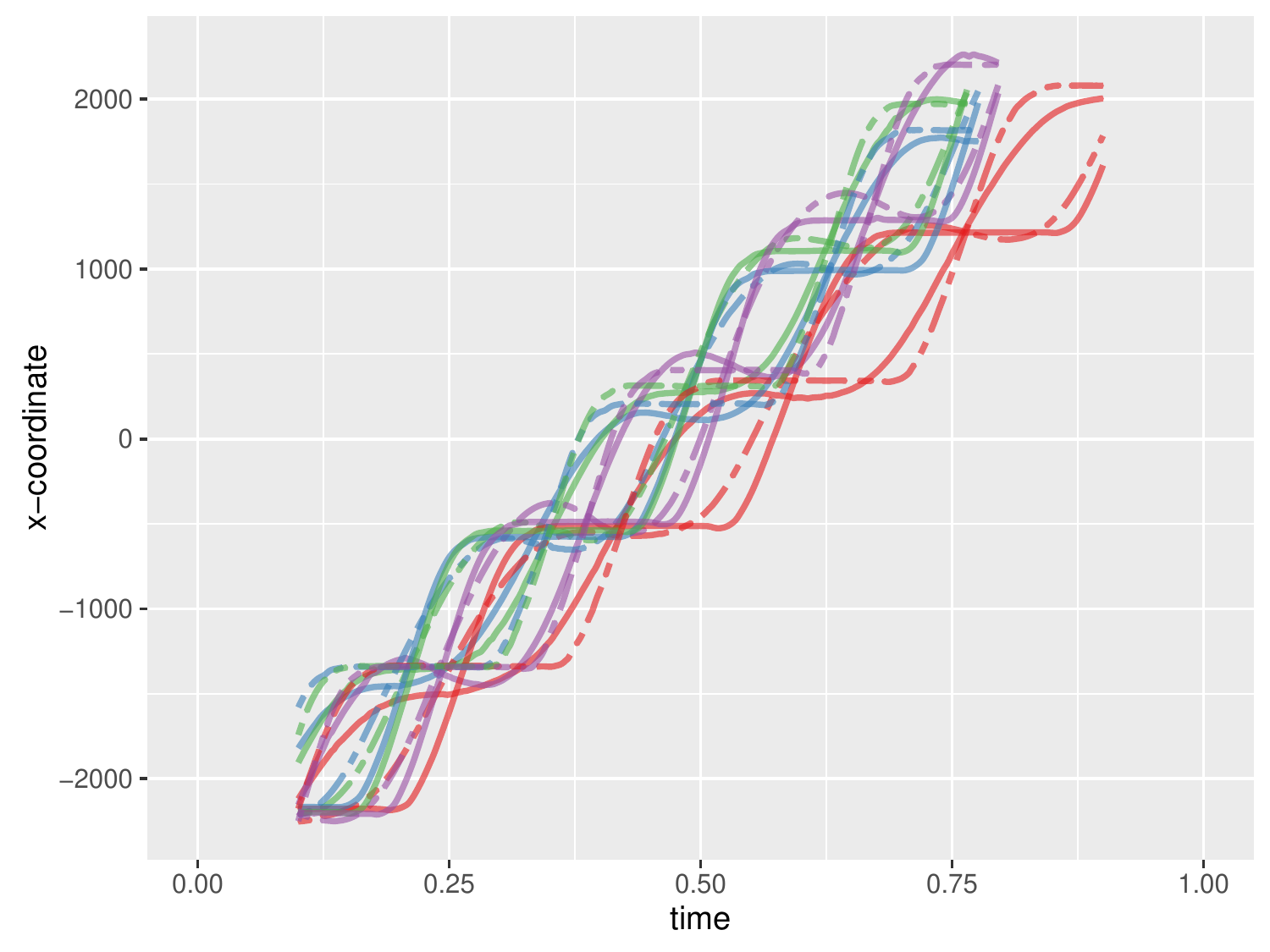}
	\includegraphics[width = 0.42\textwidth]{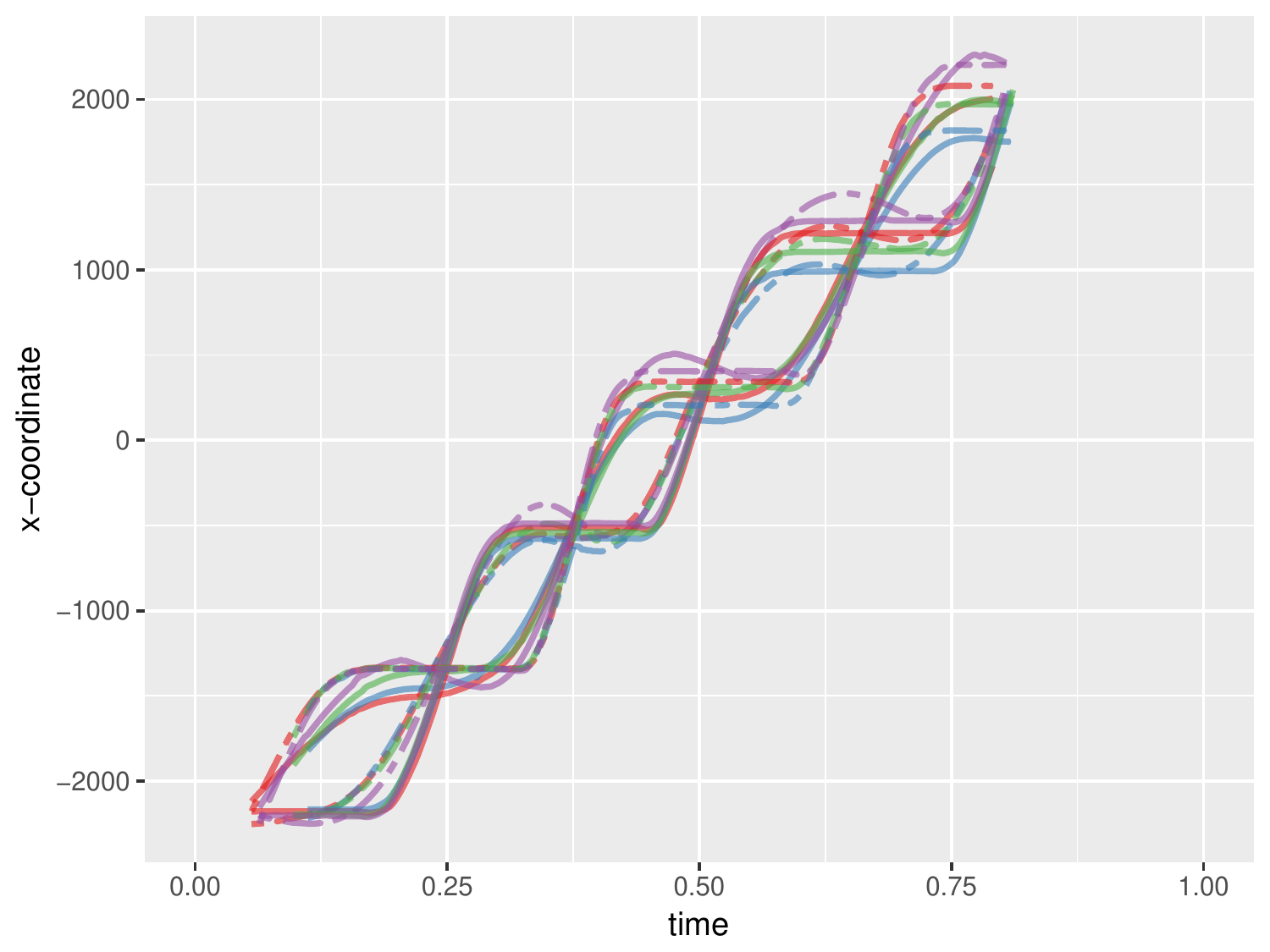}
	\includegraphics[width = 0.42\textwidth]{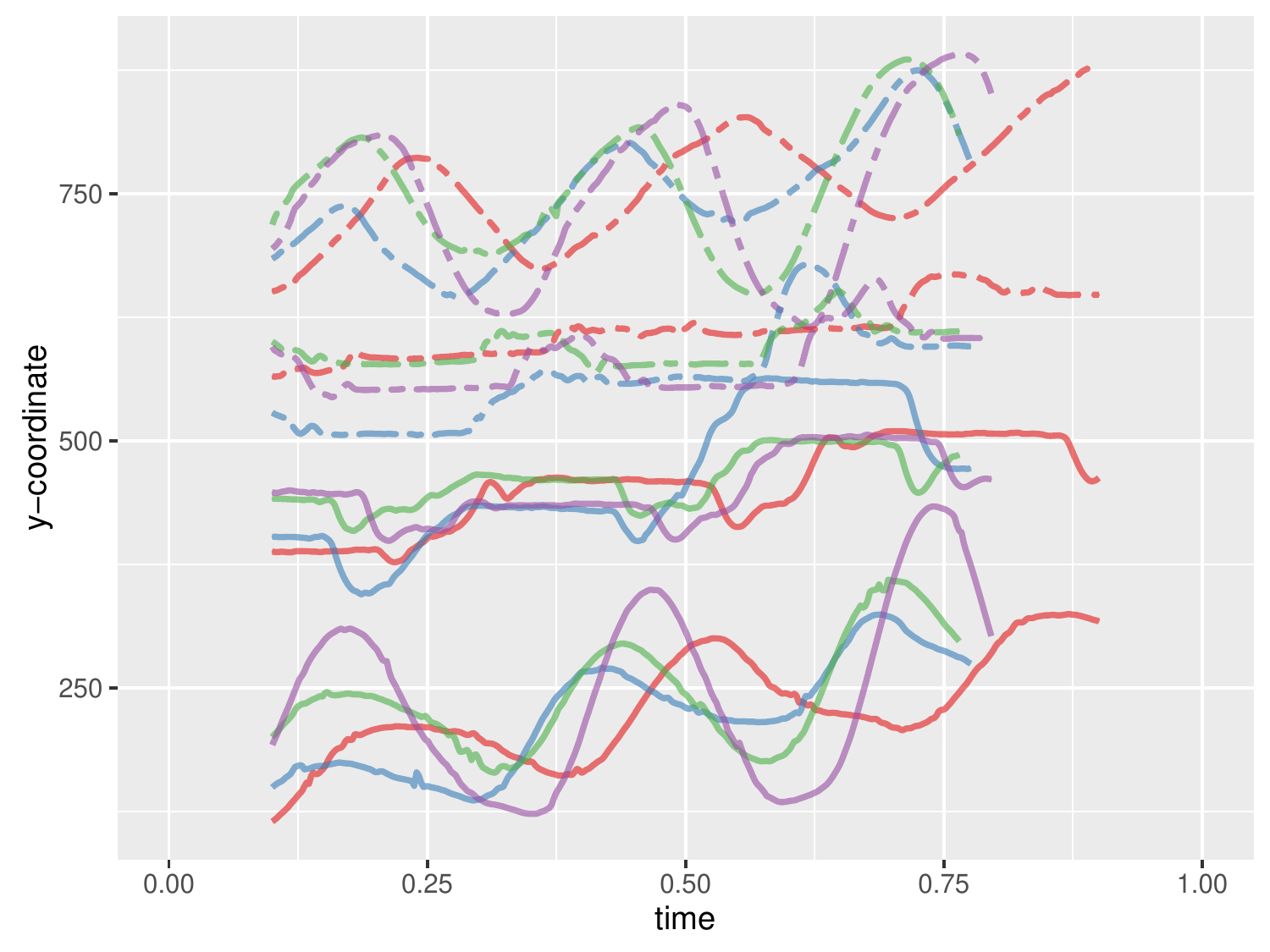}
	\includegraphics[width = 0.42\textwidth]{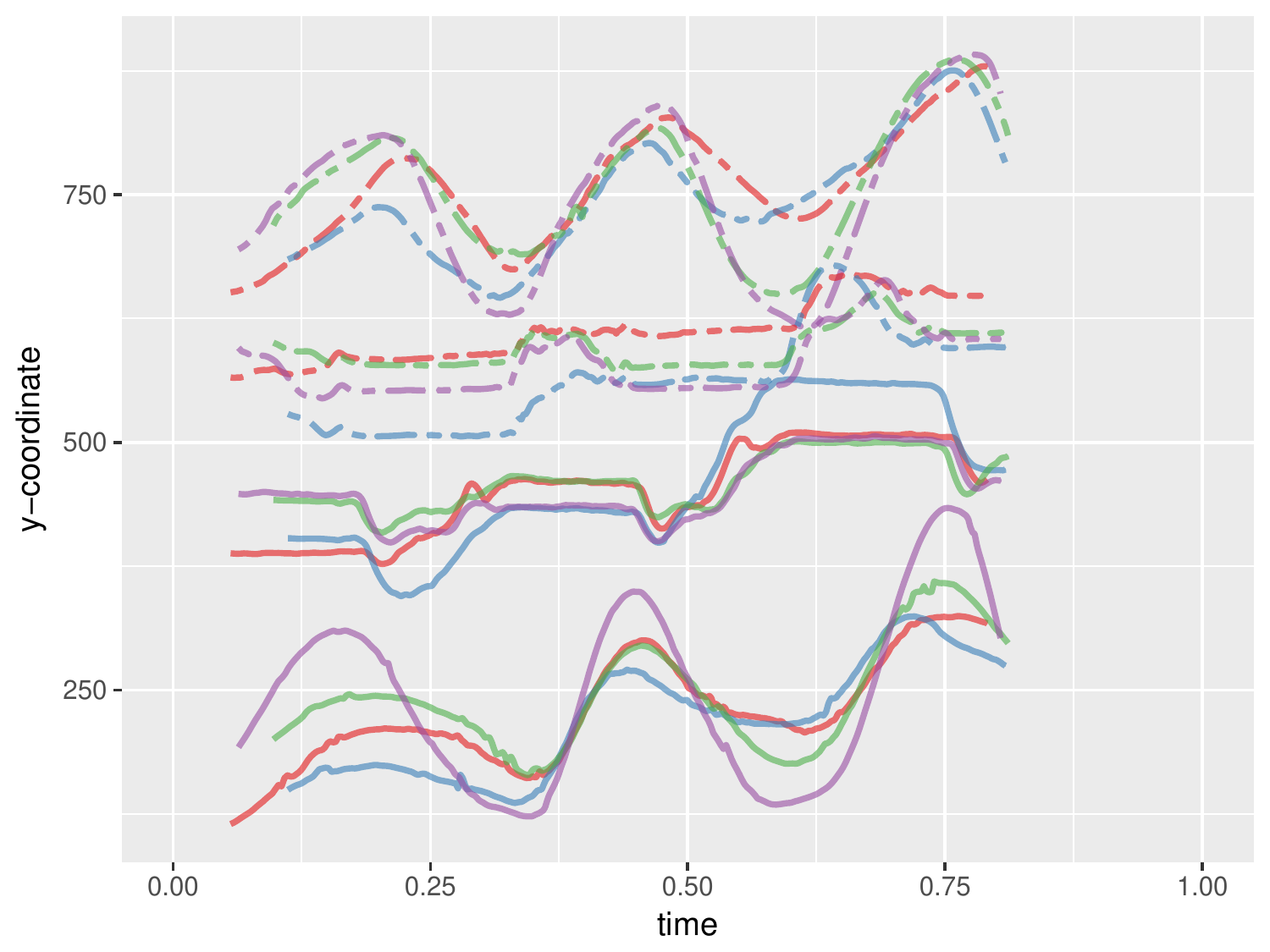}
	\includegraphics[width = 0.42\textwidth]{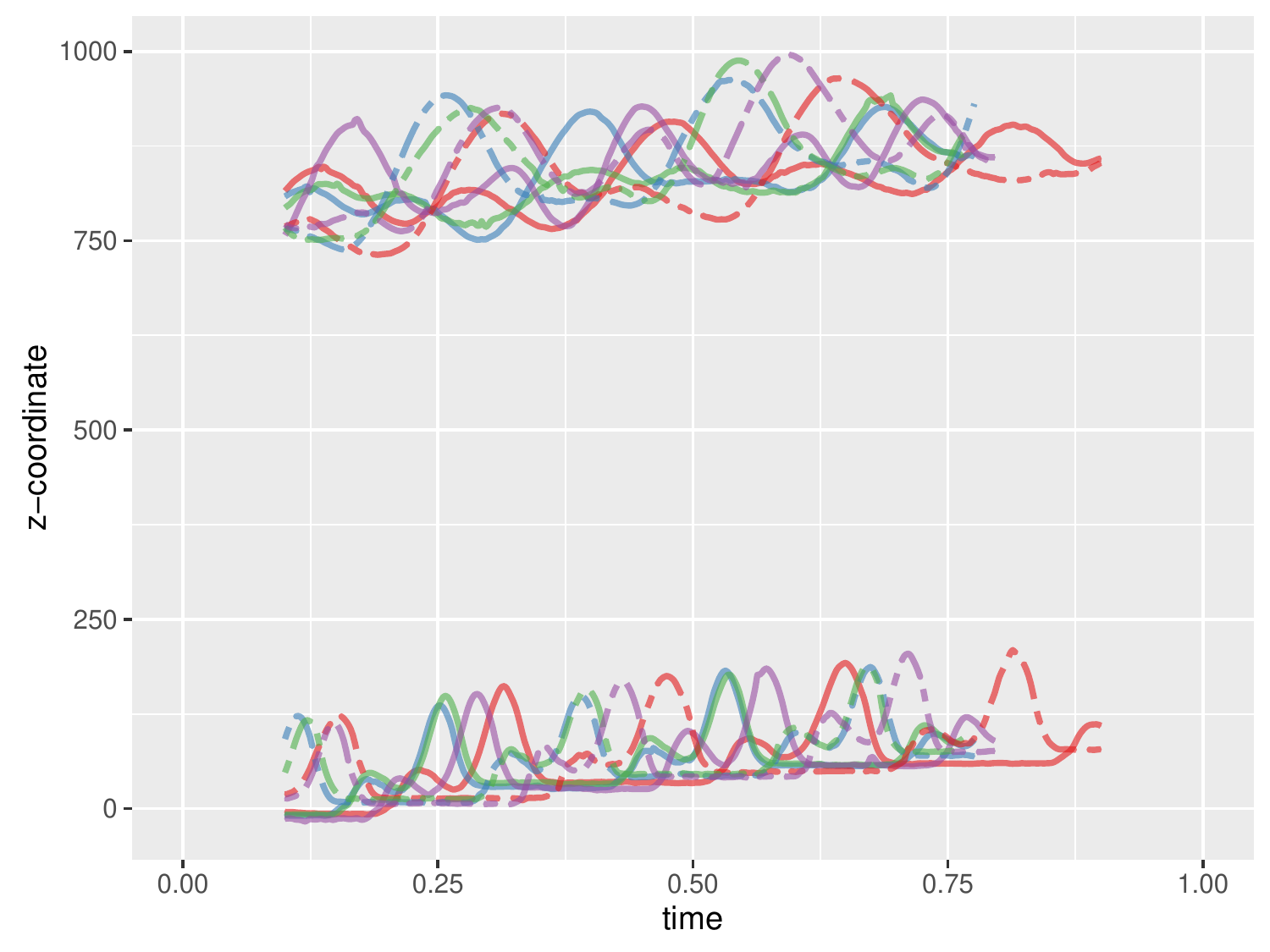}
	\includegraphics[width = 0.42\textwidth]{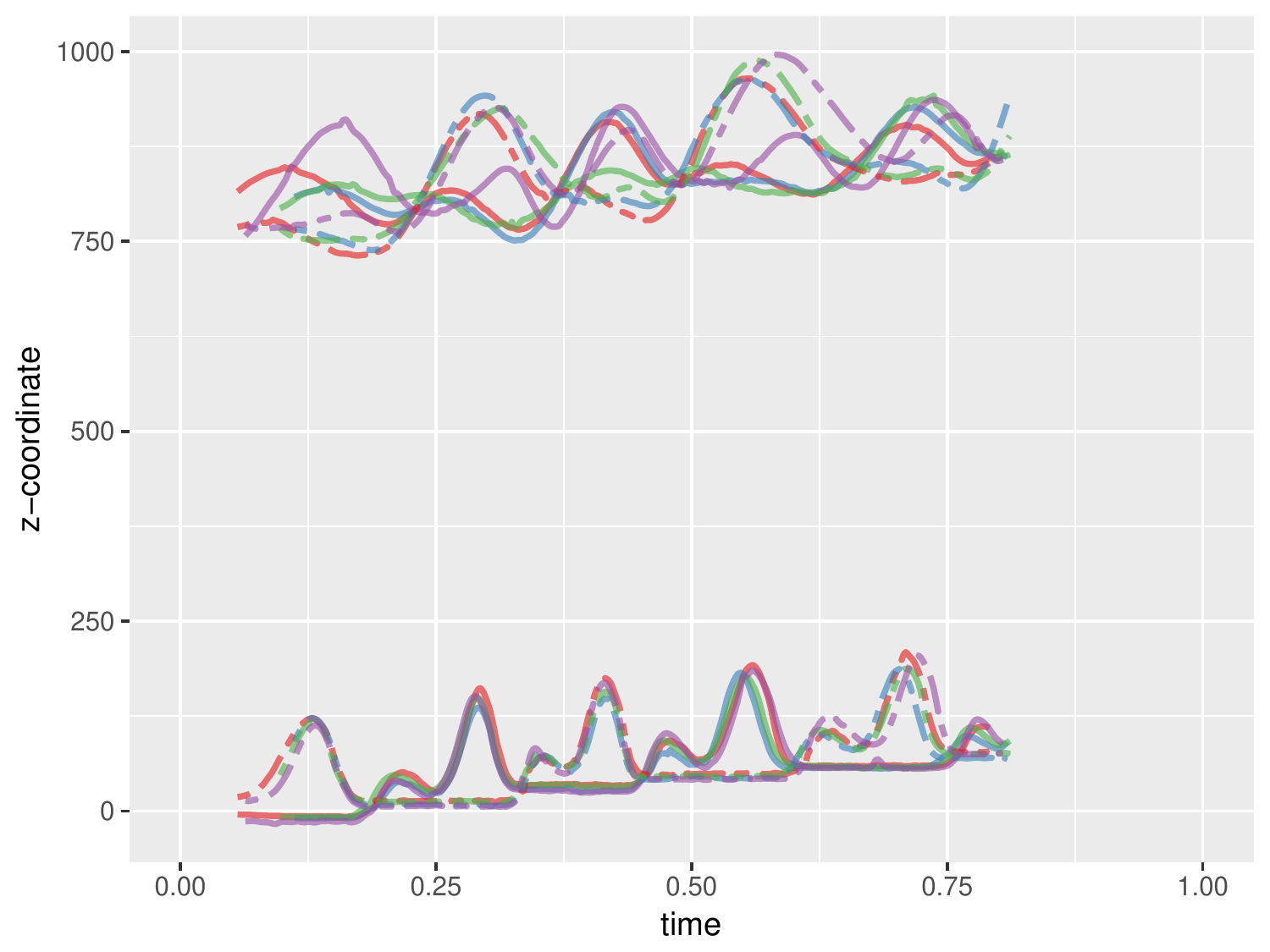}
	
	\includegraphics[scale = 0.5]{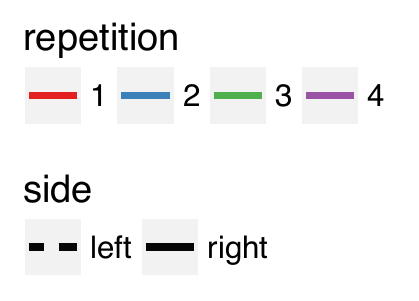}
	\caption{Observed and aligned curves from the motion capture data. Data values are the raw values from the tracking system.} \label{fig:mcd-kurver}
\end{figure}
\begin{figure}[!tp]
	\centering
	\includegraphics[width = \textwidth, trim = 0 30 0 0, clip]{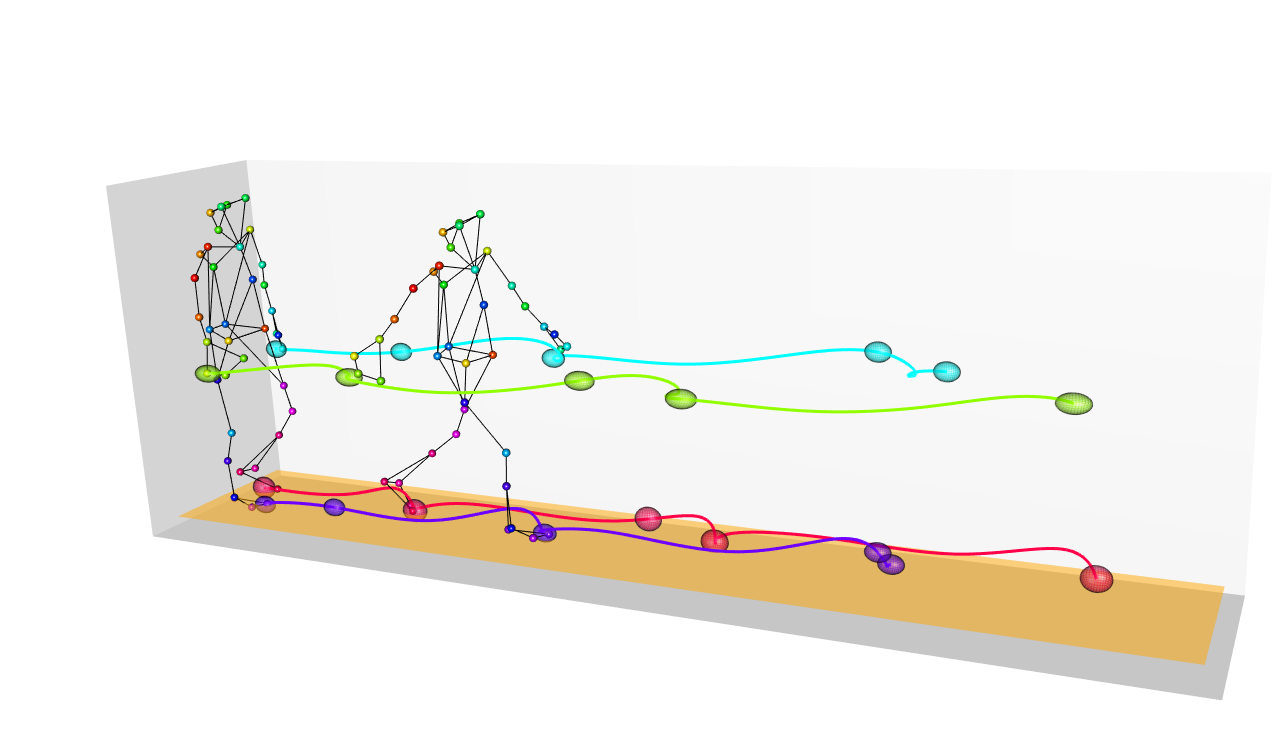}
	\caption{Estimated mean trajectories with five temporally equidistant ellipsoids indicating 95\% (marginal) confidence areas.  Two of the intermediate body poses of the fourth sample have been added as a reference.}  \label{fig:mcd-med-estimater} 
\end{figure}

\newpage
\subsection{Height and weight data}
Consider the height and weight measurements from the Copenhagen Puberty Study \citep{aksglaede2009recent, sorensen2010recent} shown in Figure~\ref{fig:hw_motivation}. The data contains 960 pairs of height and weight measurements for 106 healthy Danish boys.  The individual amplitude effects in the data set are clearly visible in the form of systematic deviations from the mean. The data also contain warping variation in the sense that age is a proxy for developmental age; each boy has his own internal clock that determines, for example, the onset of puberty. Alignment for this warping effect would then align the pubertal growth spurts visible as steep height increase in the individual boys occuring in the period 11 to 14 years. 

\paragraph{Modeling} While height is a naturally increasing function of age, weight is not necessarily. However, looking at the 2014 Danish weight reference \cite{tinggaard20142014}, we see a convex increase in the cross-sectional mean weight curve in the relevant age interval. Based on this we modeled $\btheta$ using an increasing spline (integrated quadratic B-splines) basis with 20 equidistant internal knots in the age interval $[5, 17]$ in both dimensions. 
The warping functions~\eqref{eq:warpfct} were modeled as increasing cubic (Hyman filtered) splines with $m_{\bw}=3$ equidistant internal anchor points in the age interval $[5, 20]$ and extrapolation at the right boundary point as in Figure~\ref{fig:warp_interpolation}(b). The latent variables $\bw_n$ were modeled as discretely observed Brownian motions with a single scale parameter. The temporally increasing variance of the Brownian motion seems as a good model for developmental age where one would expect high initial synchronization, and up to several years desynchronization at the onset of puberty.  

To model the amplitude variation, we used a dynamic cross-covariance with equidistant knots at $\{5, 10, 15, 20\}$ years as described in Proposition~\ref{prop-dyn-cov}, that is,
\begin{equation*}
\mathcal{S}(s,t) = f_\text{Mat\'{e}rn(2,$\kappa$)}(s,t) B_s^\top B_t.
\end{equation*}
The temporal covariance structure $f_\text{Mat\'{e}rn(2,$\kappa$)}(s,t)$ is the Mat\'ern covariance function with fixed smoothness parameter $\alpha = 2$ and unknown range parameter $\kappa$, see equation~\eqref{cov:matern} in the supplement. This implies twice differentiable sample paths of $\bx_i$, which is a reasonable assumption given the nature of the data. Furthermore, since we expected heterogeneous variances of the measurement error $\bepsilon_{nk}$ on height and weight in equation~\eqref{eq:mod1b}, we extended the model with a parameter $\rho > 0$ such that 
\begin{equation*}
\var(\bepsilon_{nk}) = \sigma^2 \vektor{1 & 0 \\ 0 & \rho}.
\end{equation*}
This gives a total of 14 parameters describing the cross-covariance model.

\paragraph{Results}

\begin{figure}[!tp] 
	\centering
	\makebox[0.45\textwidth]{\textsf{observed}}\makebox[0.45\textwidth]{\textsf{aligned}}
	\includegraphics[width = 0.45\textwidth]{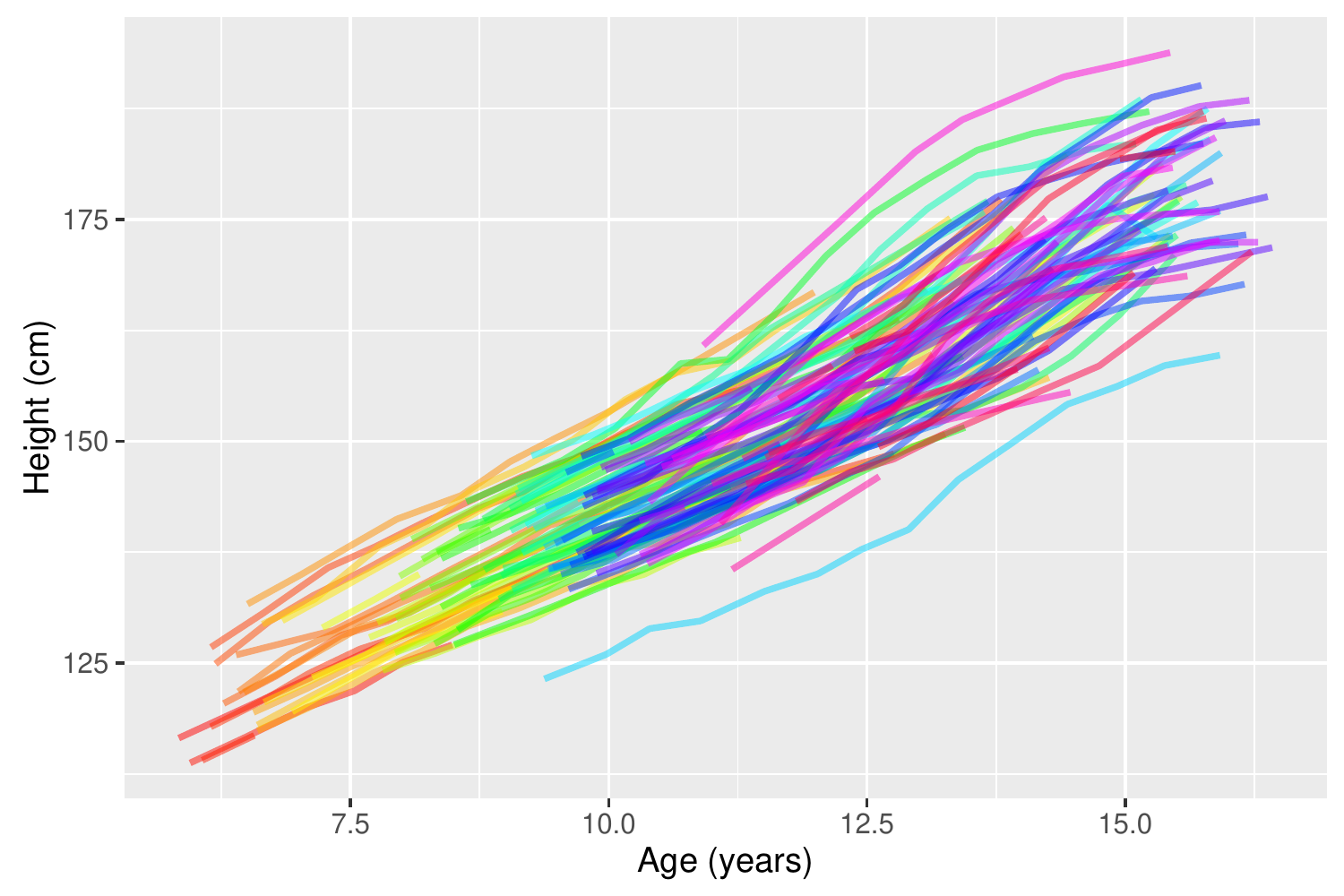}
	\includegraphics[width = 0.45\textwidth]{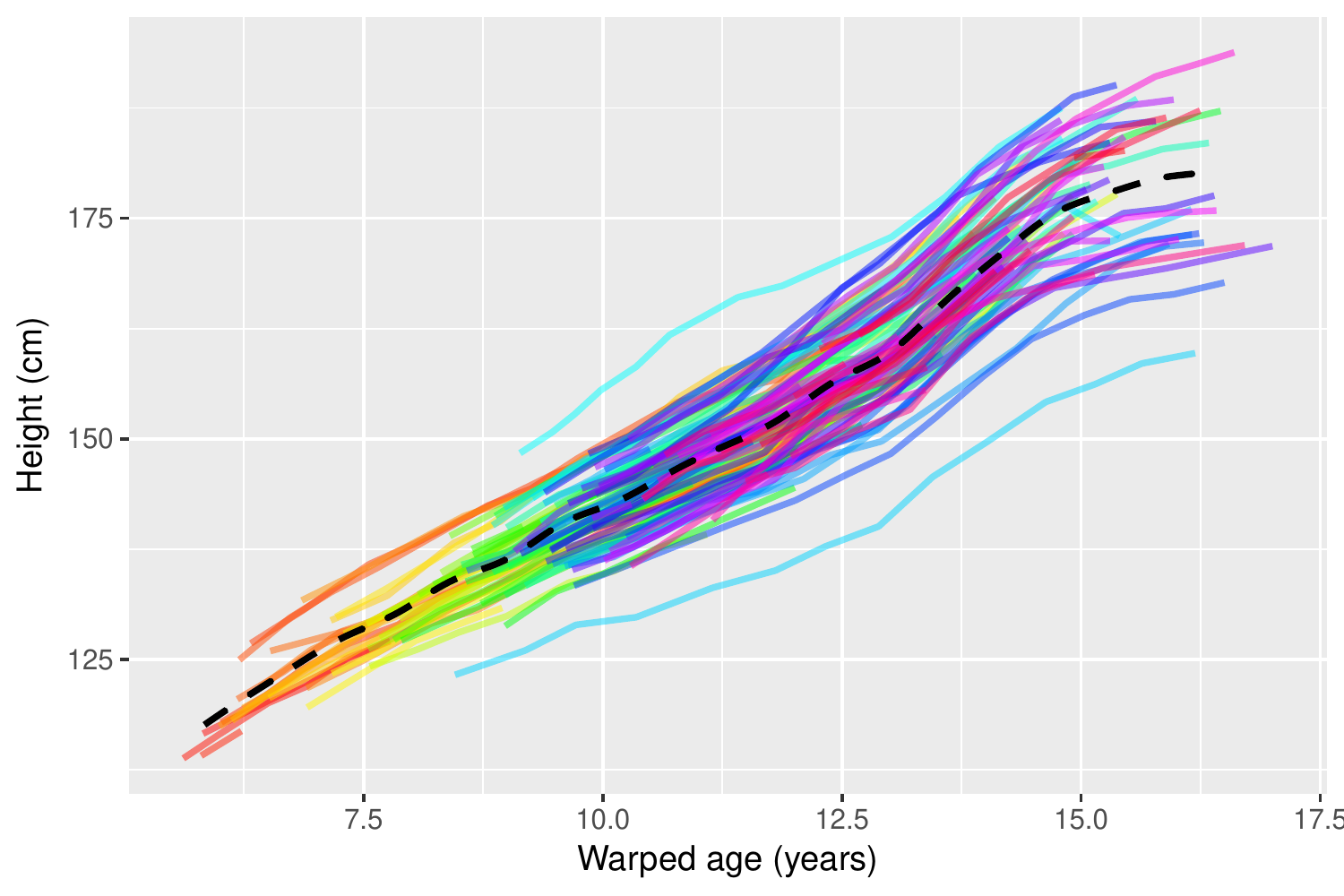}
	\includegraphics[width = 0.45\textwidth]{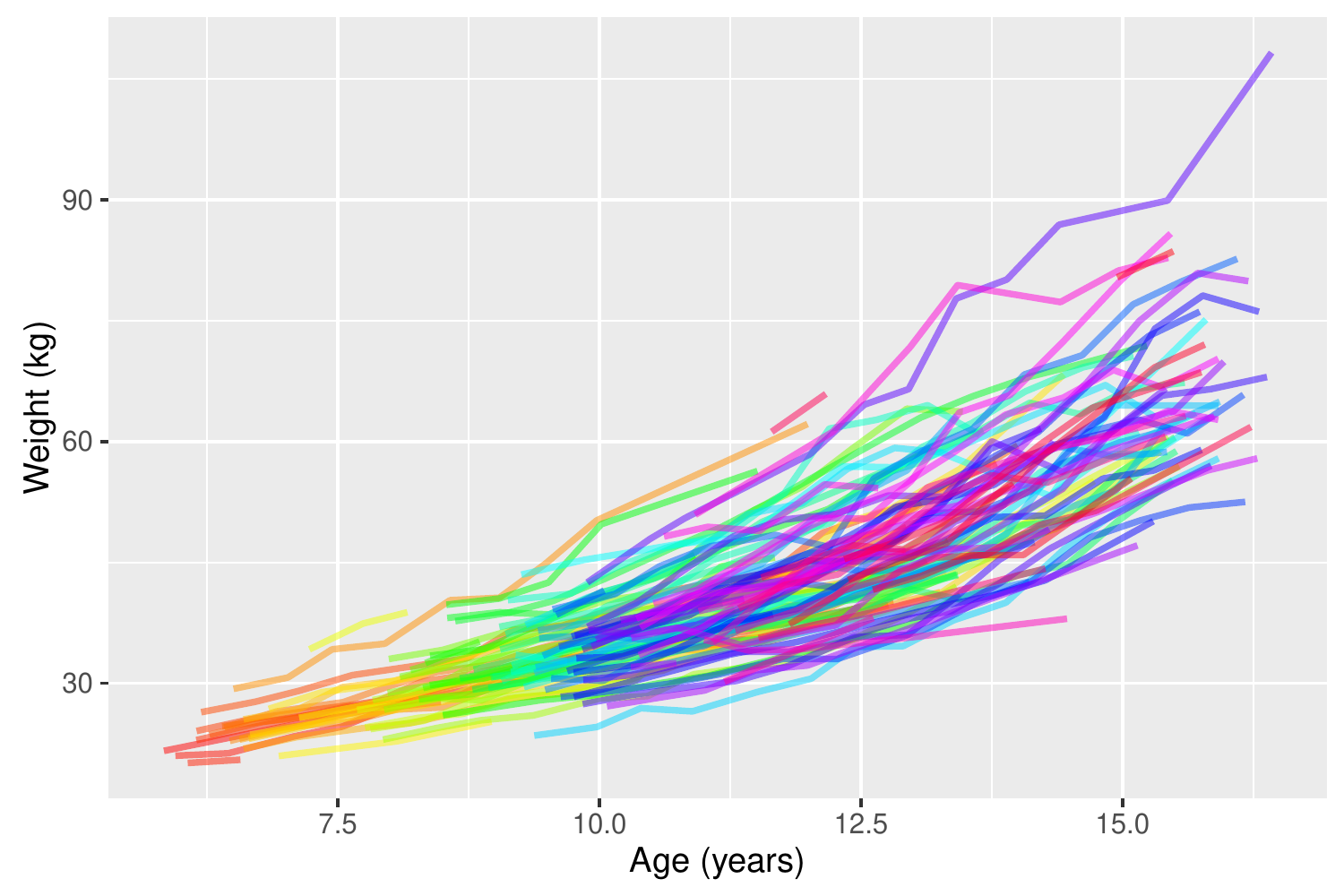}
	\includegraphics[width = 0.45\textwidth]{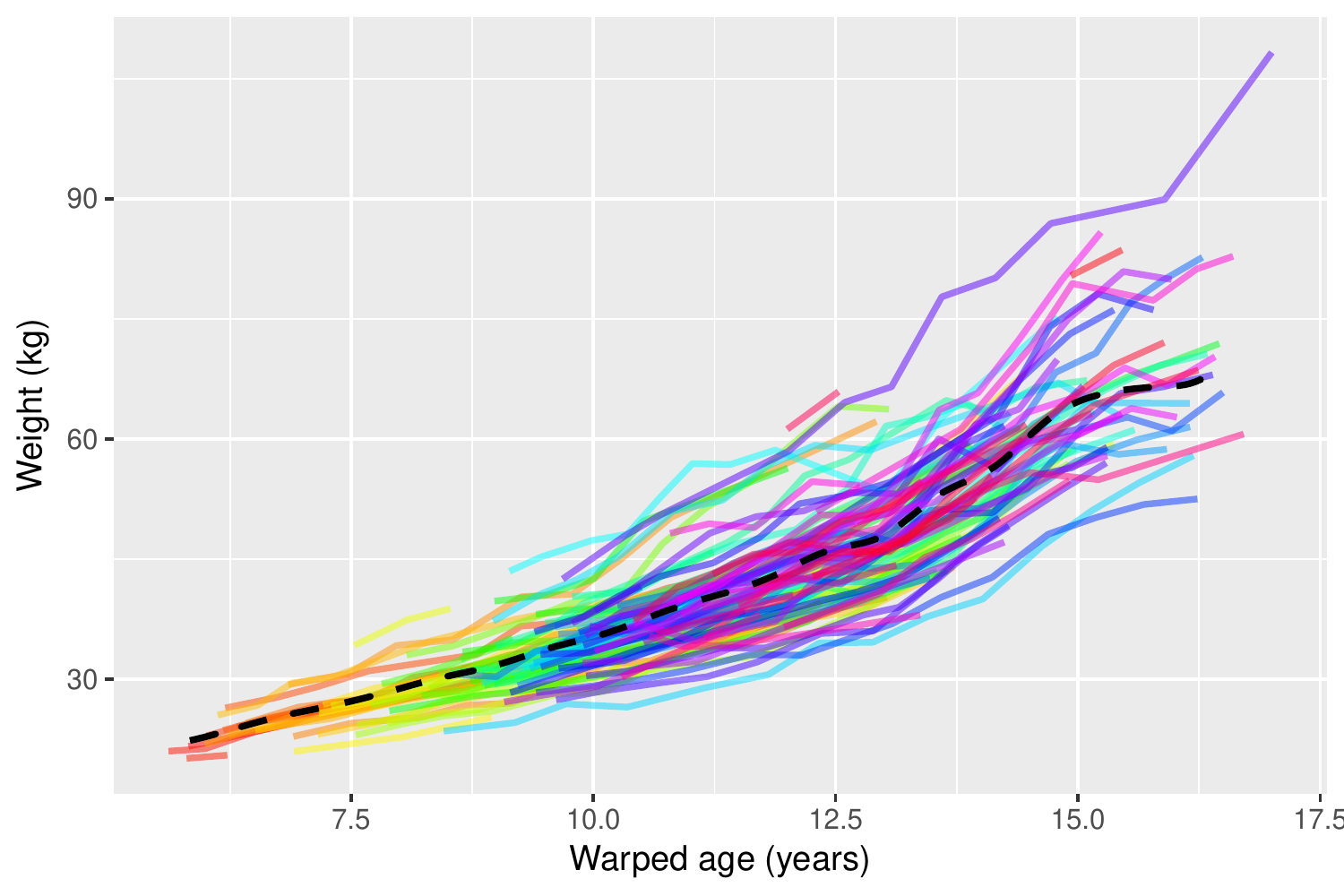}
	\caption{Observed and aligned height and weight curves from the Copenhagen Puberty Study. The estimated template curves are displayed as dashed black lines.} \label{fig:height_weight_results}
\end{figure}

\begin{figure}[!p] 
	\centering
	\includegraphics[width = 0.43\textwidth]{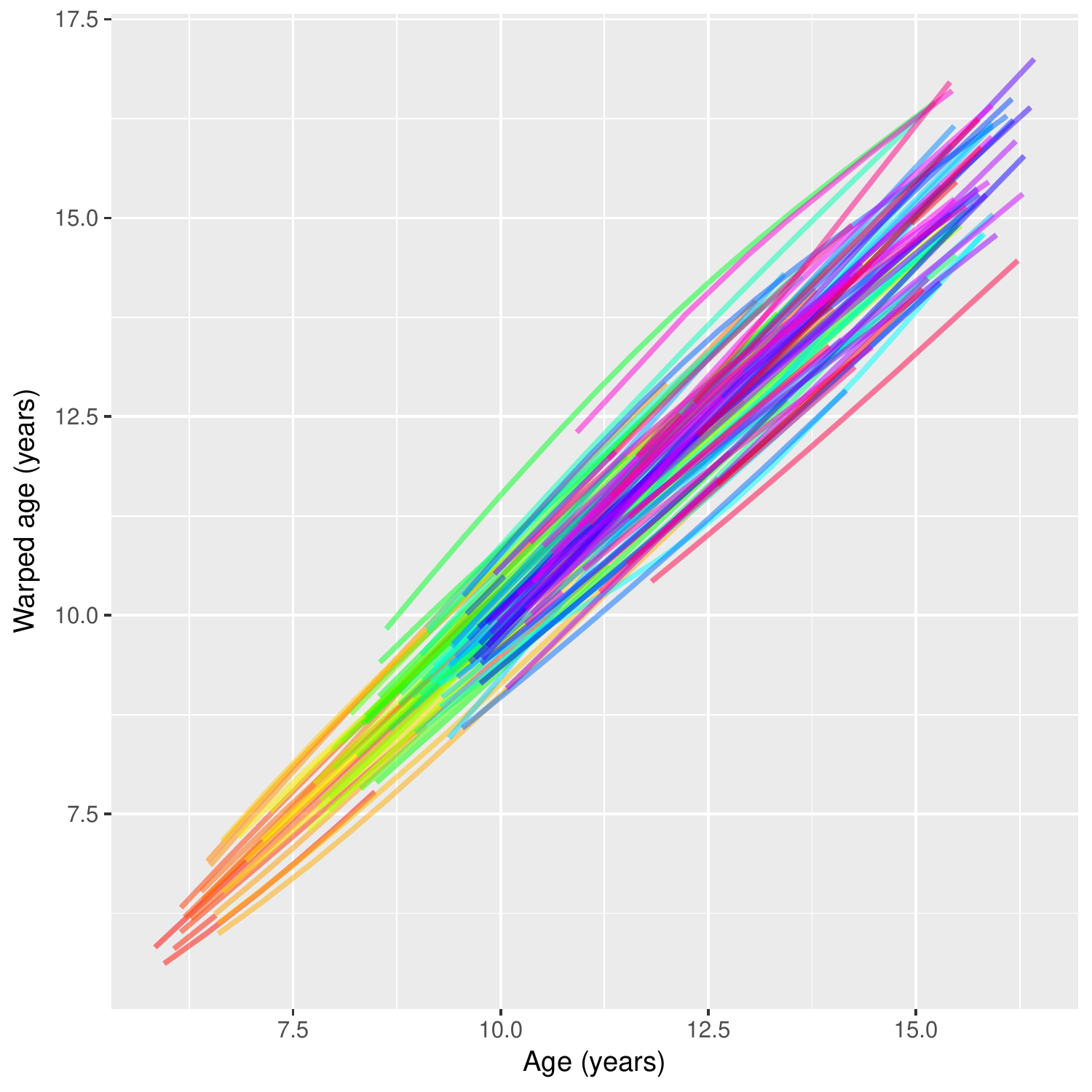}
	\caption{Predicted warping function corresponding to the data in Figure~\ref{fig:height_weight_results}.}\label{fig:height_weight_warps}
\end{figure}

\begin{figure}[!p] 
	\centering
	\includegraphics[width = 0.43\textwidth]{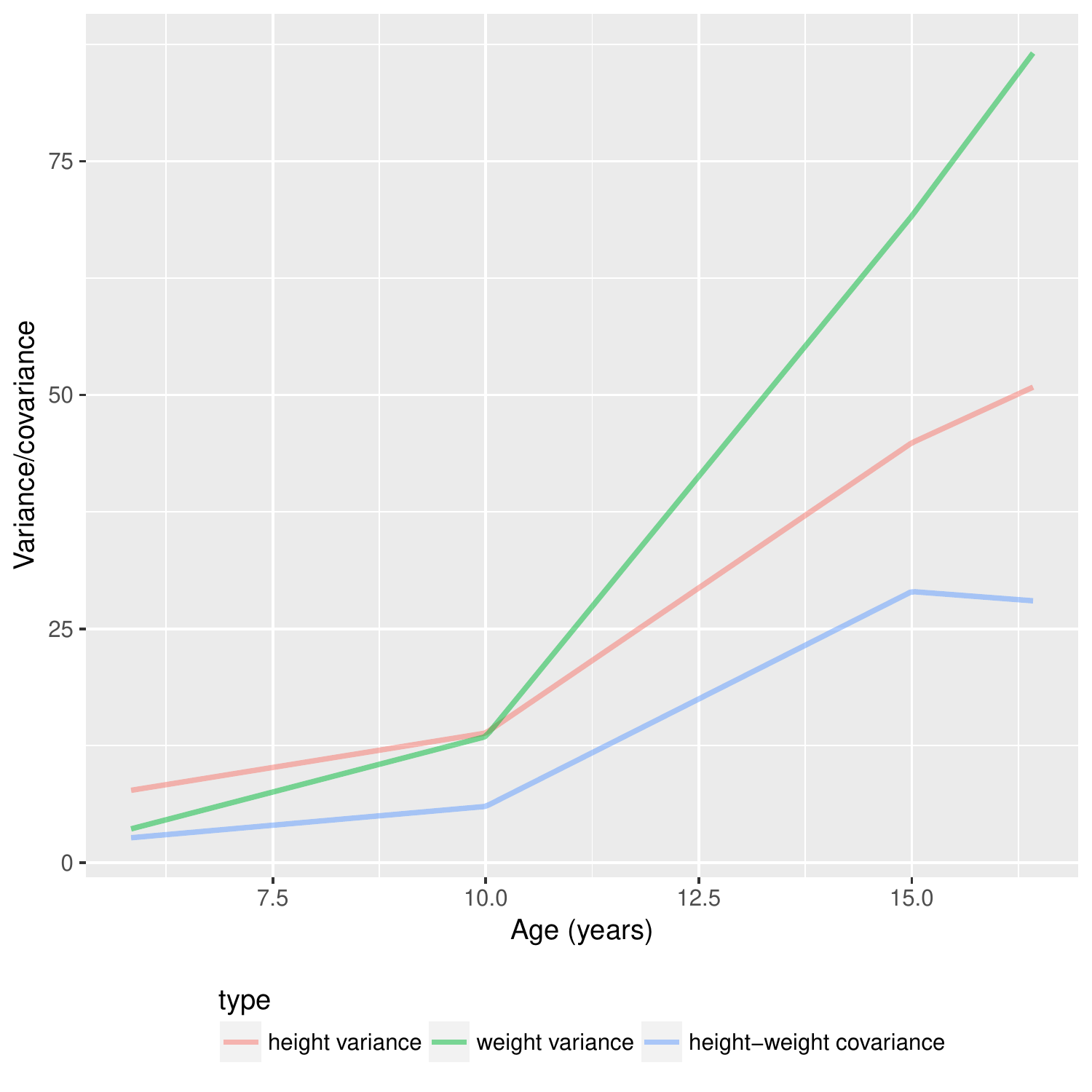}
	\caption{Estimated marginal variances and cross-covariance functions of age for the height and weight data in Figure~\ref{fig:height_weight_results}. The marginal variances also include the error variance. }\label{fig:height_weight_covariance}
\end{figure}

The aligned samples and estimated means are displayed in the right-side panels of Figure~\ref{fig:height_weight_results}, and the corresponding predicted warping functions can be found in Figure~\ref{fig:height_weight_warps}. We see that the individual growth curves are now aligned more tightly than before, in particular the pubertal height spurts seem to be well aligned. Although the shapes of the curves are well aligned, the model still allowed for considerable amplitude variation to be left after warping. This is as it should be; for increasing curves such as these a perfect fit could be achieved by warping, but the result would be meaningless and indicate that developmental age could be perfectly determined from a single measurement of a child's height. Given the proposed model-based separation of amplitude and warping effects induced by the maximum likelihood estimates, the information contained in a child's longitudinal data about the child's developmental age can be quantified through the posterior distribution of the warping effects. 

The estimated covariance structure is shown in Figure~\ref{fig:height_weight_covariance}. As one would expect, height and weight variances increase with age. The covariance increases at a slower rate and has a slight decrease after 15 years, giving a correlation of 0.42 at 16.5 years.

\subsection{Arm movement data}
\begin{figure}[!p]
	\centering
	\includegraphics[width = 0.44\textwidth, trim = 80 20 120 140, clip]{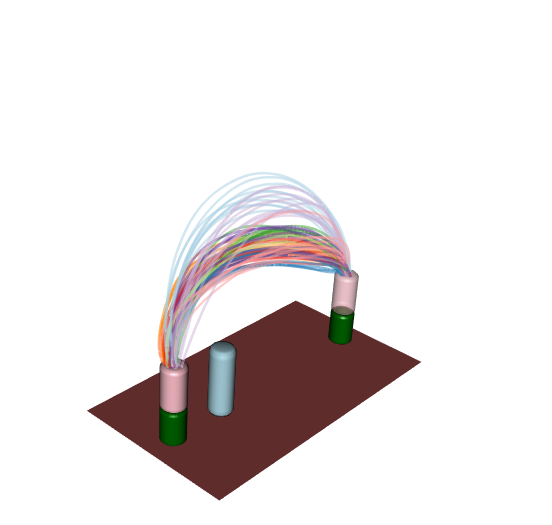}
	\includegraphics[width = 0.44\textwidth, trim = 80 20 120 140, clip]{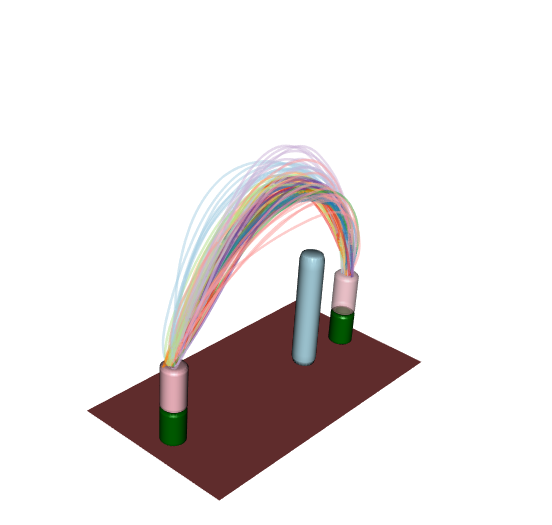}
	
	\includegraphics[width = 0.44\textwidth]{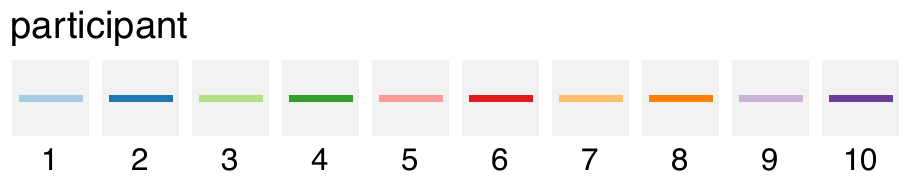}
	\caption{Recorded movement paths in experiment with small obstacle 15 cm from starting position (left) and tall obstacle 45 cm from starting position (right).}\label{fig:movement1}
\end{figure}

Our third example is an analysis of human arm movements in obstacle avoidance tasks. Hand-movement paths in two experimental conditions are displayed in Figure~\ref{fig:movement1}.  In each experimental condition, a wooden cylindrical object (pink) located at a starting position (green cylinder) was to be moved 60 centimeters forward and placed on a target cylinder. Between the starting and target positions, a cylindrical obstacle was placed. The obstacle height (small, medium, tall) and obstacle position (five equidistant positions between starting and target positions) varied with experimental condition. A total of 15 obstacle avoidance conditions were performed plus a control condition with no obstacle. Ten right-handed participants performed ten repetitions of each experimental condition, and the spatial position of the hand was recorded at a sampling rate of 110 Hz. The data set thus consists of 1600 functional samples with a total of $m = 175,535$ three-dimensional sampling points giving a total sample size of $526,605$ observations. The present data set is described in detail in \cite{Grimme2014}, and the experiment is a refined version of the experiment described in \cite{grimme2012naturalistic}. The data set is available through a public repository.\footnote{\url{https://github.com/larslau/Bochum_movement_data}}

\paragraph{Data processing and modeling} We analyzed the data separately for the 16 experimental conditions. Following the convention for modeling human motor control data, time was modeled as percentual time rather than observed time. This means that all movement time intervals were scaled to $[0,1]$, such that 0 corresponds to the onset of the movement and 1 corresponds to the end of the movement. We used model~\eqref{eq:mod1} to model the data separately for the 16 different experimental conditions. The mean path $\btheta_j$ for the $j$th participants was modeled in a cubic B-spline basis with 21 interior knots. We modeled the warping functions~\eqref{eq:warpfct} as increasing cubic spline interpolations (Hyman filtered) with $m_{\bw}=3$ equidistant anchor points. The choice of three knots was evaluated, and found optimal, in terms of the cross-validation set-up described in the classification study below. The latent variables $\bw_n$ were modeled as discretely observed Brownian bridges with a single scale parameter, because of the fixed endpoints of the data. 

The amplitude variation was modeled using a dynamic cross-correlation model with knots at $\{0, 0.4, 0.6, 1\}$ as described in Proposition~\ref{prop-dyn-cov}, that is,
\begin{equation*} 
\mathcal{S}(s,t) = f_\text{mixture($a$)}(s,t) f_\text{Mat\'{e}rn($\alpha$,$\kappa$)}(s,t) B_s^\top B_t.
\end{equation*}
The temporal covariance structure is given as a combination of stationary and bridge Mat\'ern serial correlation with mixture parameter $a$, smoothness parameter $\alpha$, and range parameter $\kappa$. The details of this covariance structure are described in equations~\eqref{cov:mixture} and \eqref{cov:matern} in the supplement. This dynamic cross-correlation structure has $27$ free parameters.

The knot positions $\{0, 0.4, 0.6, 1\}$ were chosen such that we were able to model a  change in cross-correlation structure around the middle of the movement in percentual time, in particular the change that happens when the movement progresses from lift to descend. The concept of isochrony \citep{grimme2012naturalistic} suggests that the times where the peak heights are reached are largely invariant to obstacle height and placement, and for the given data the peak heights generally occur for $t\in(0.4, 0.6)$, see for example Figure~\ref{fig:unaligned_aligned}.

The left column of Figure~\ref{fig:unaligned_aligned} displays the observed $x$-, $y$- and $z$-coordinates in a single experimental condition as functions of percentual time. The right column displays the coordinates in predicted warped percentual time.  We see that the $x$- and $z$-coordinates are very well aligned within participant, and that the alignment of the $y$-coordinate seems to contain a relatively larger proportion of amplitude variation after alignment than the $x$- and $z$-coordinates. We note that the alignment procedure does not change the movement path in $(x, y, z)$-space. The predicted maximum-a-posteriori warping functions are displayed in Figure~\ref{fig:pred_warp}.

\begin{figure}[!tp]
	\centering
	\includegraphics[width = 0.45\textwidth]{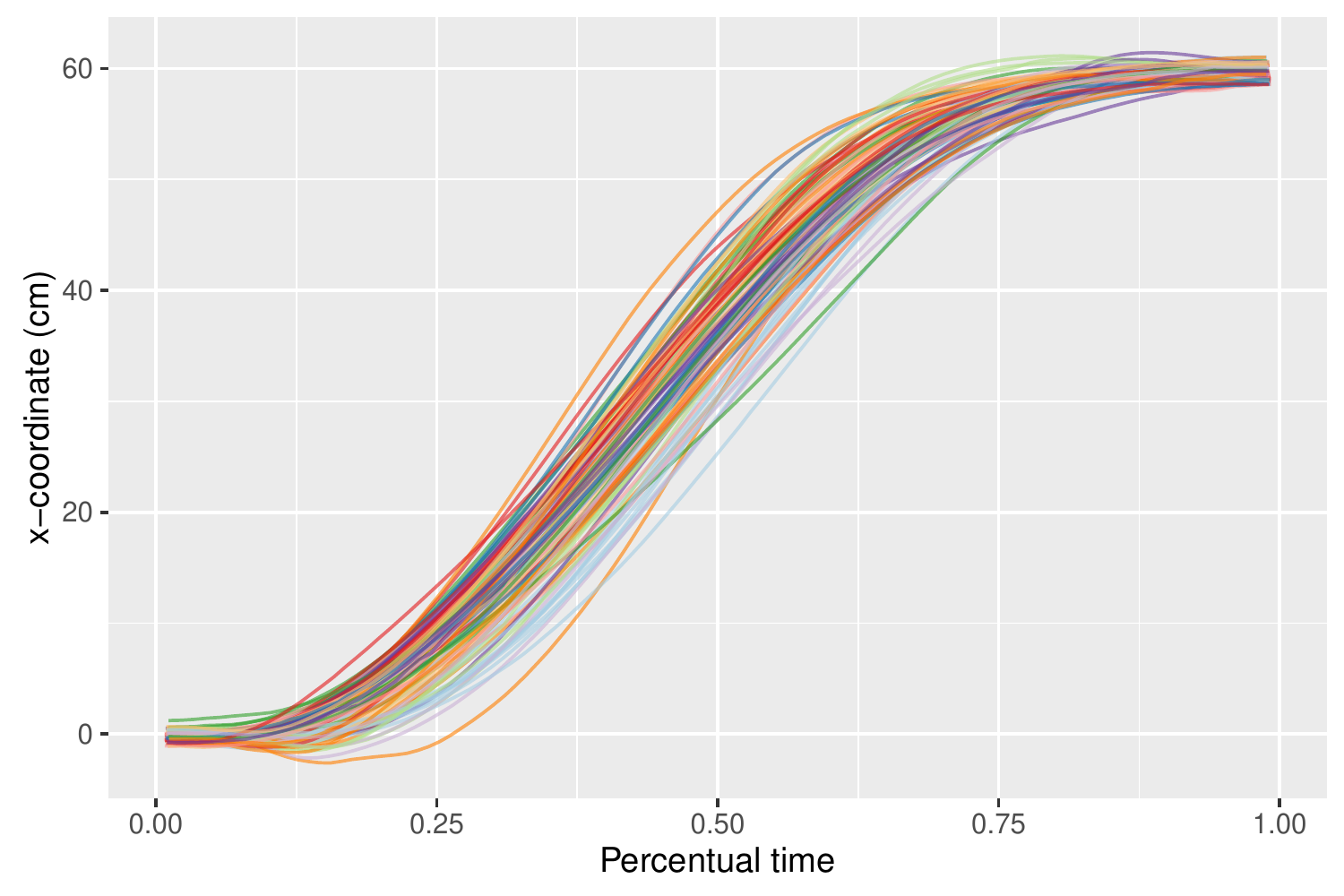}
	\includegraphics[width = 0.45\textwidth]{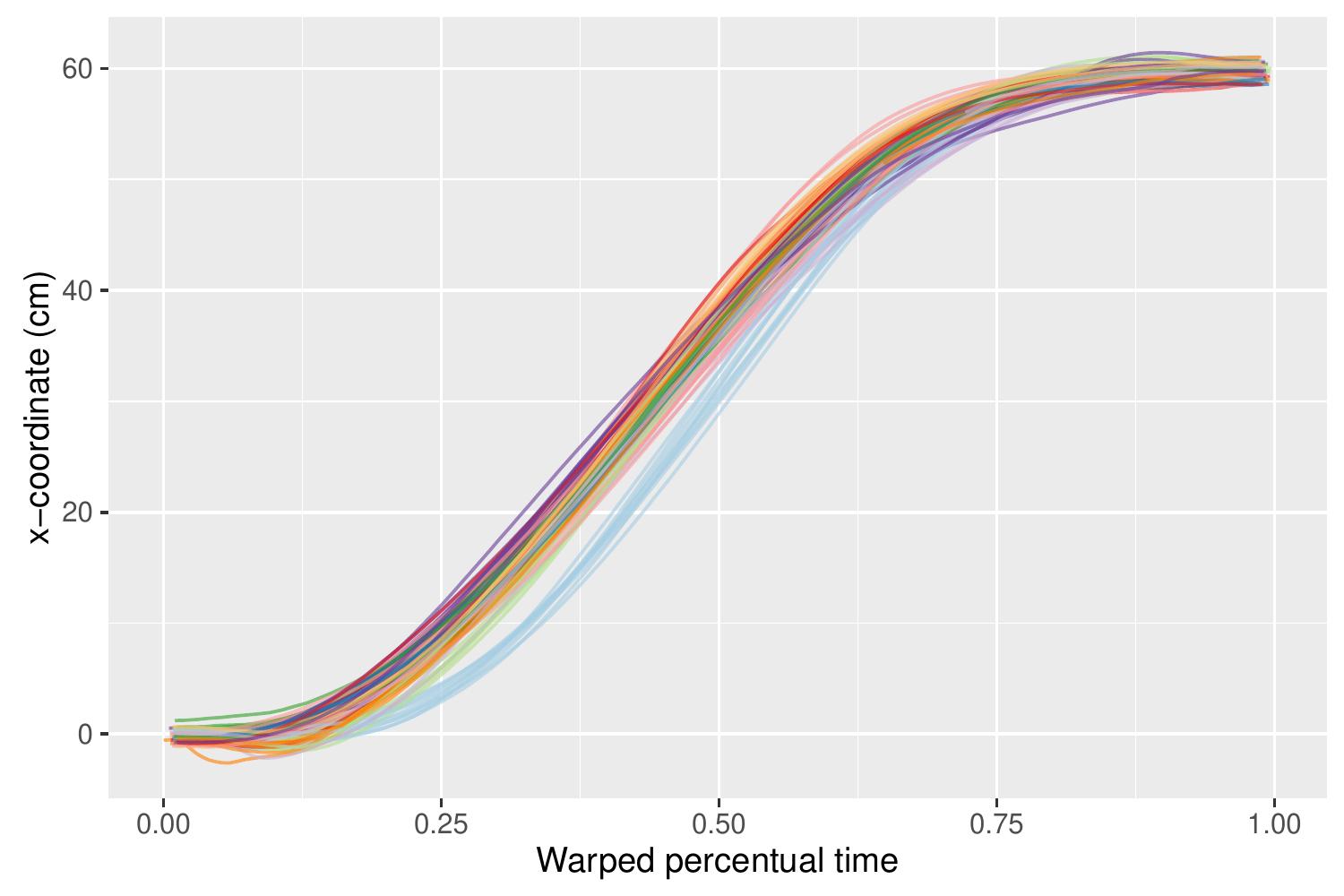}
	
	\includegraphics[width = 0.45\textwidth]{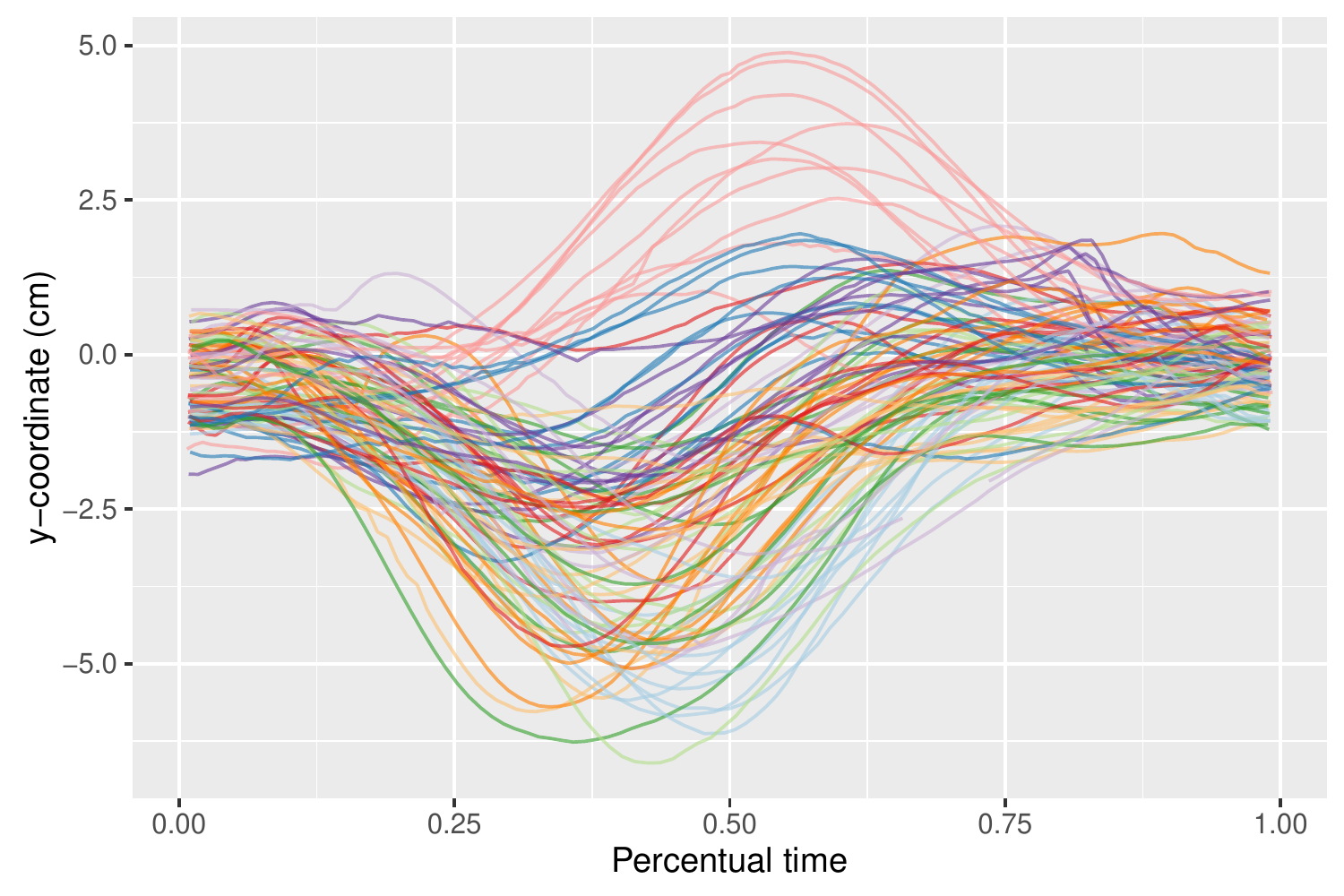}
	\includegraphics[width = 0.45\textwidth]{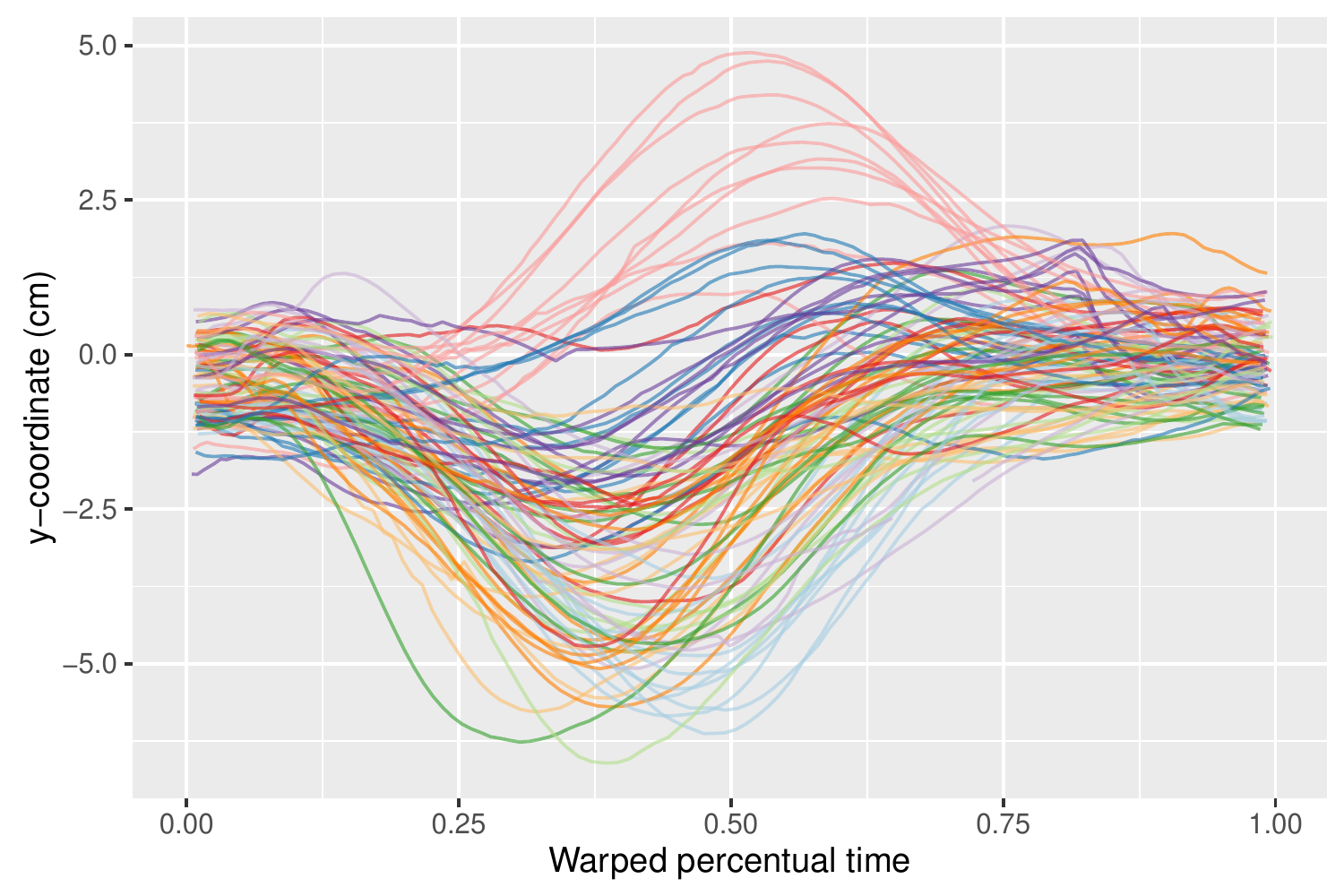}
	
	\includegraphics[width = 0.45\textwidth]{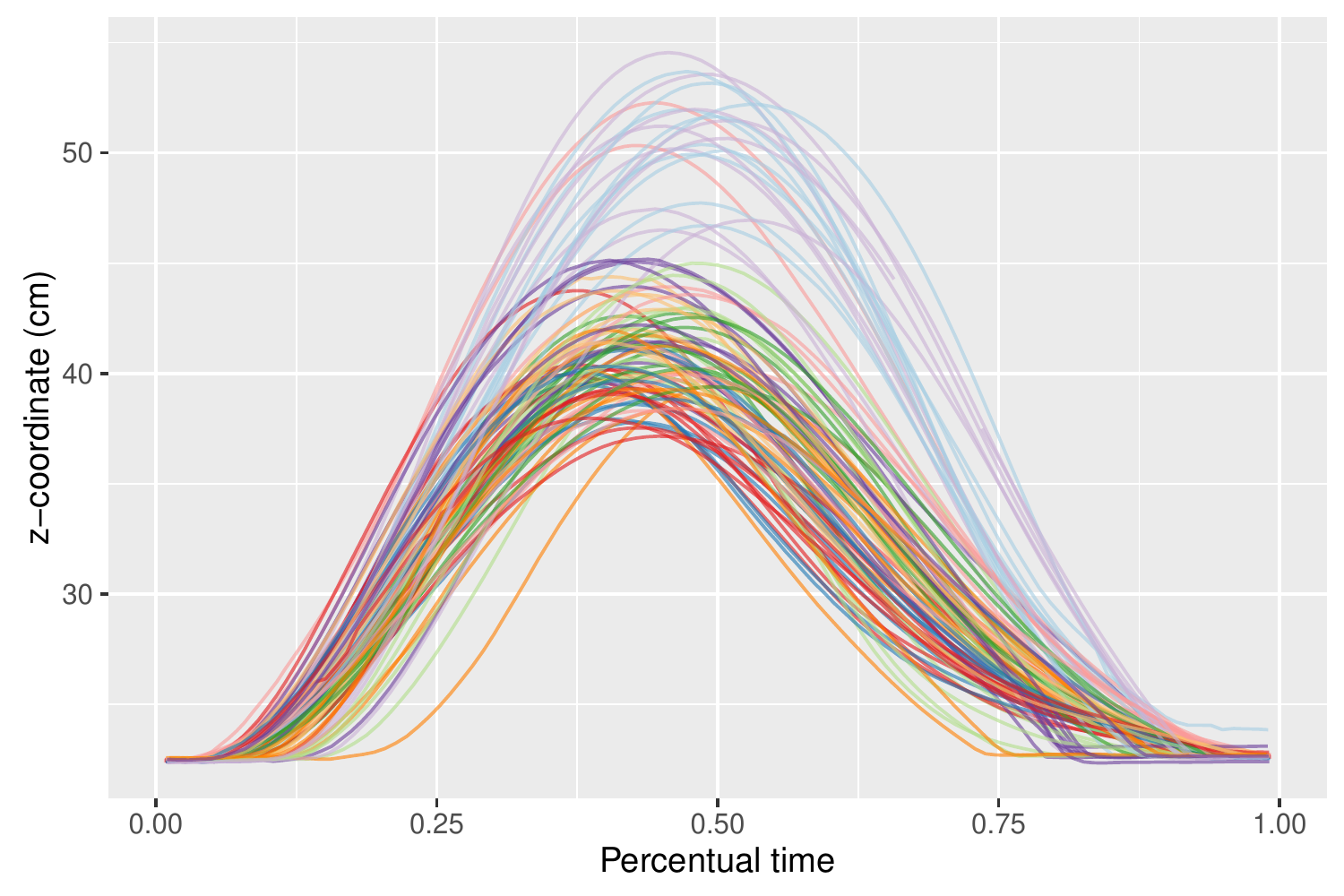}
	\includegraphics[width = 0.45\textwidth]{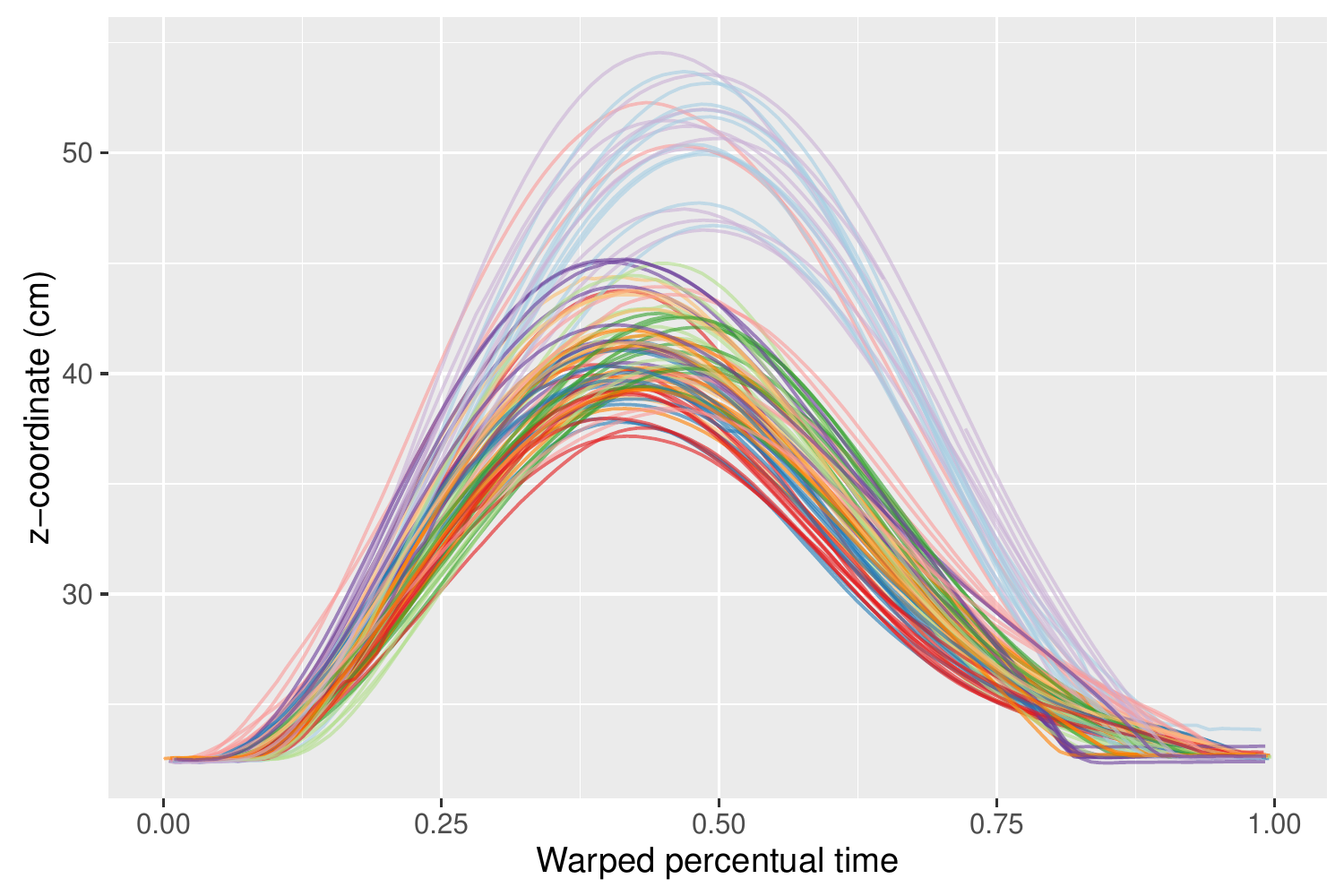}
	
	\caption{Data from the experiment with a small obstacle 30 cm from starting position plotted in percentual time (left column) and warped percentual time (right column). Coloring follows the coloring in Figure~\ref{fig:movement1}.}\label{fig:unaligned_aligned}
\end{figure}

\begin{figure}[!hpt]
	\centering
	\includegraphics[width = 0.45\textwidth]{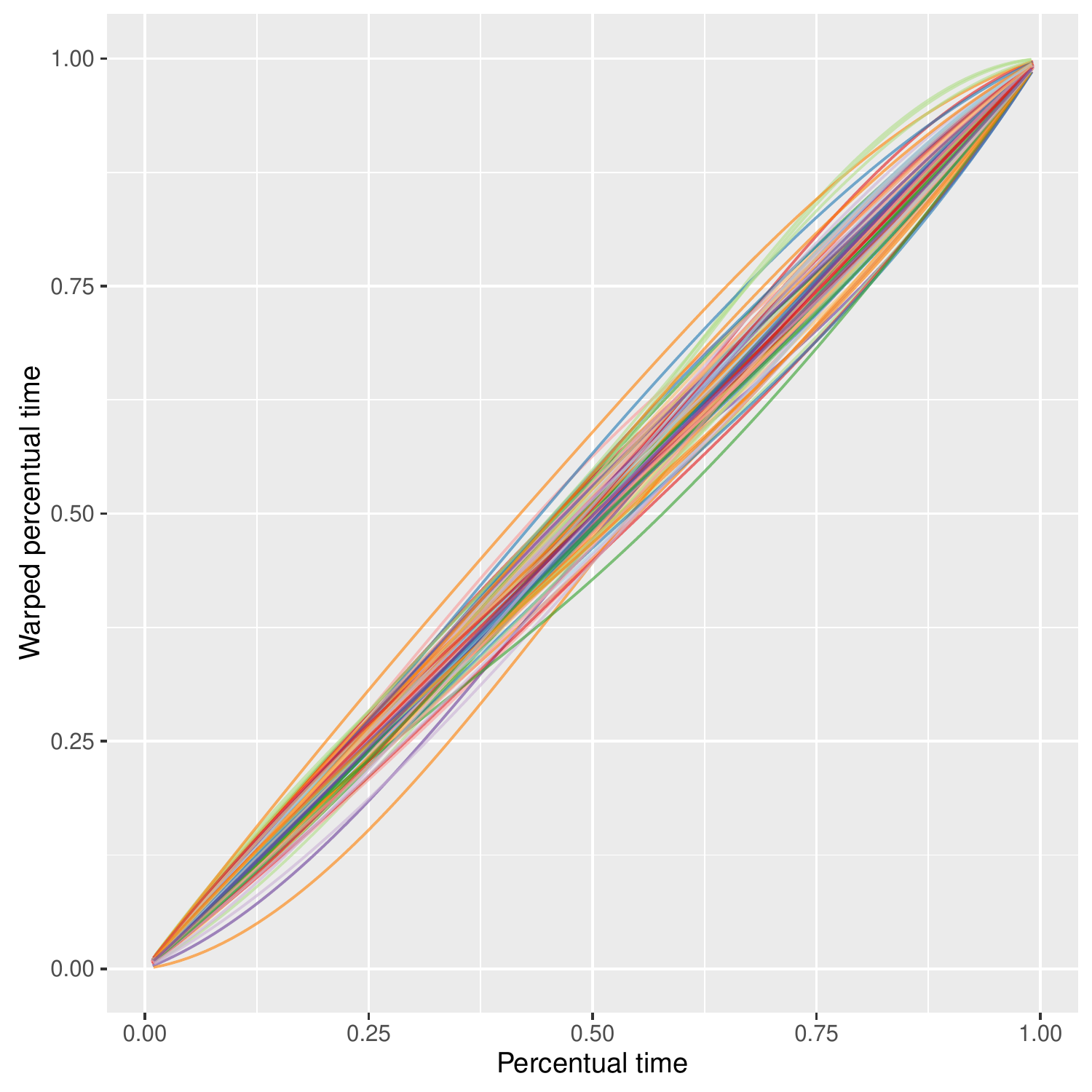}
	\caption{Predicted warping functions corresponding to the alignment in Figure~\ref{fig:unaligned_aligned}.}  
	\label{fig:pred_warp}
\end{figure}

\paragraph{Parameter estimates}
The common variance parameter $\sigma$ and the Mat\'ern parameters $\alpha$ and $\kappa$ varied little with experiment. On the other hand the relative weight, $a$, of the stationary covariance and the bridge covariance varied considerably across experiments. However $a$ was large in all cases meaning that a large majority of the variance is captured by the stationary part.  We refer to Table~\ref{table:suppl_param} in the supplementary material for all parameter estimates.

\paragraph{Variance and cross-correlations} 
The amplitude variation was assumed to be generated from Gaussian processes $\bx_n$ and white noise $\bepsilon_n \sim N(0, \sigma^2 \bI_{3m_n})$. Since the observed curves are very smooth the estimated contributions from the white noise terms were very small.  

Figure~\ref{move-V-S} show the ratios of systematic amplitude variance to linearized systematic variance (amplitude and linearized warp) as estimated by the model. At the endpoints all variance was captured by the serially correlated amplitude effect.
In the $y$-direction almost all variation was captured by the amplitude variance which fits well with the aligned $y$-coordinates of the movement path in Figure~\ref{fig:unaligned_aligned}. The warp-related variance accounted for a larger part of the variation in the $x$- and $z$-directions. The temporal structure of the $x$-coordinate reveals that the warp effect explained the majority of the variance around the middle of the movement, while for the $z$-coordinate it explained the majority of the variance during lift and descend. Thus, the model predicted warping functions using a trade-off where the (percentual) temporal midpoints of the transport component and the lift and descend components had highest influence when measuring the alignment of samples.

The individual participant's estimated mean trajectories and the systematic amplitude variation are illustrated in Figure~\ref{3d-cov1}. In the right-hand illustration, the prediction ellipsoids in the middle are relatively small considering that this is the region with most variation. This is because most of the variation was captured by the participant-specific mean curves and the warping effect, as one would expect. The amplitude variance around the endpoints seems somewhat overestimated, which suggests that the chosen anchor points provided a too coarse model for the dynamics of the true covariance function around the endpoints.

\begin{figure}[!tp]
	\centering
	\includegraphics[width=0.95\textwidth]{./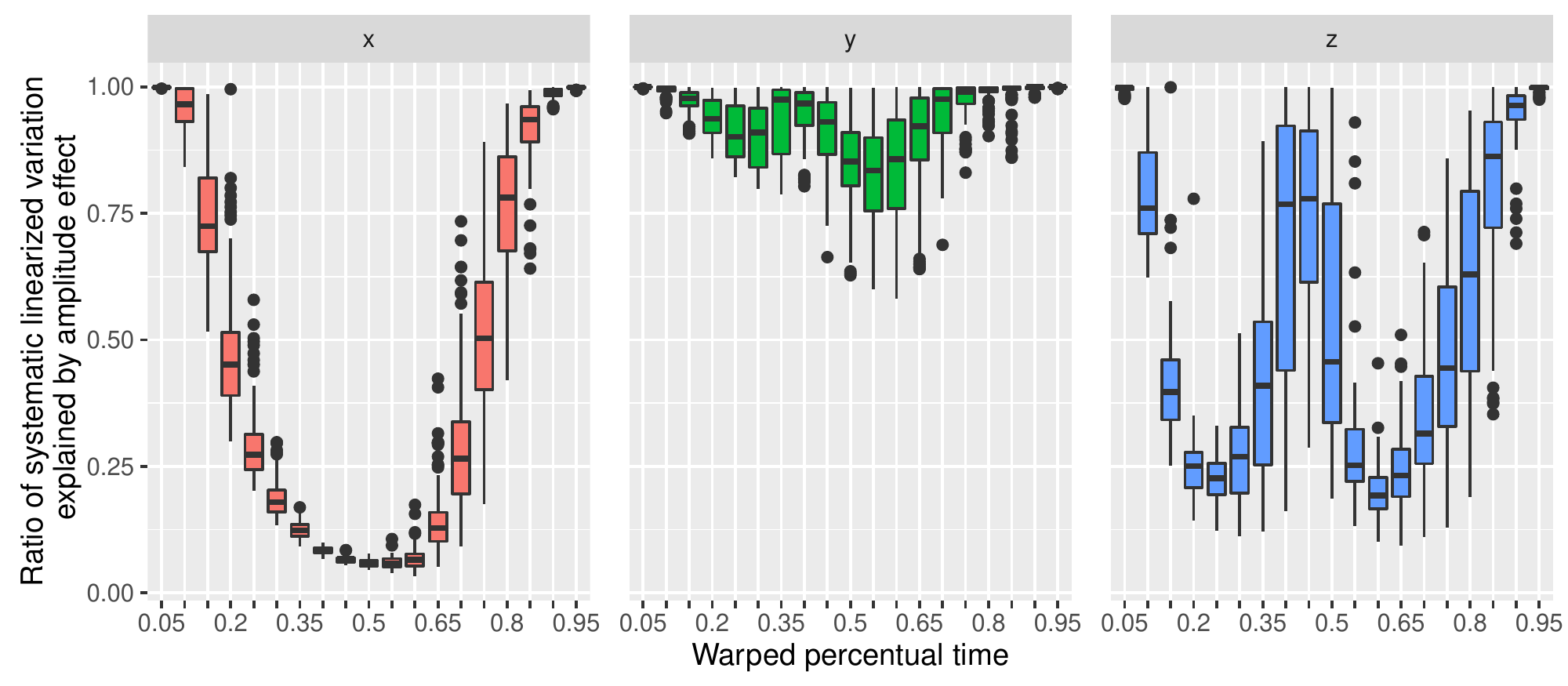}
	\caption{Coordinatewise boxplot of the temporal development of the ratio of $S_n$ to $S_n + Z_n C Z_n^\top$ for the 100 samples in the experiment with a small obstacle 30 cm from starting position.}
	\label{move-V-S} 
\end{figure}

\begin{figure}[!tp]
	\centering
	
	\includegraphics[width = 0.48\textwidth, trim = 0 100 0 50, clip = true]{./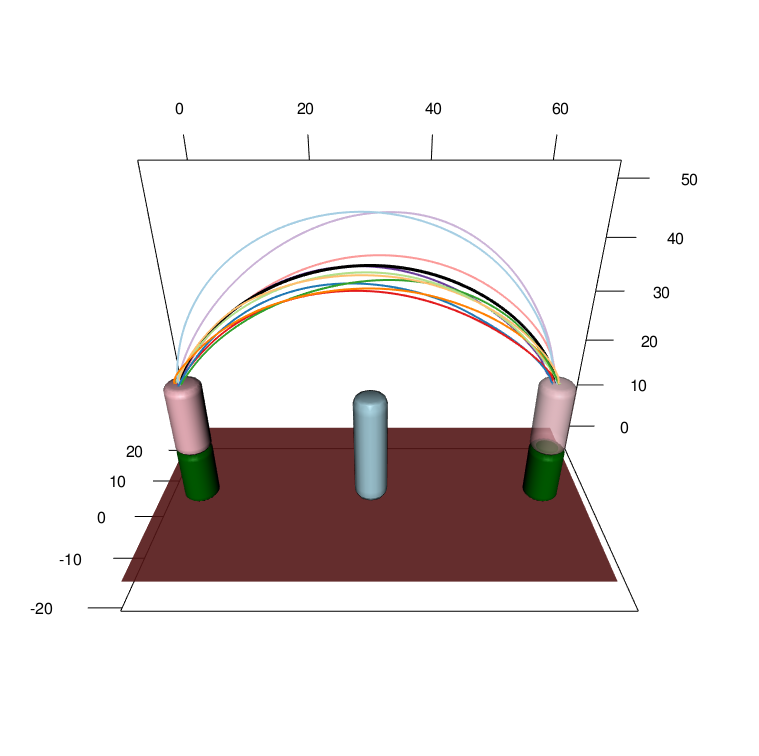}
	\includegraphics[width = 0.48\textwidth, trim = 0 100 0 50, clip = true]{./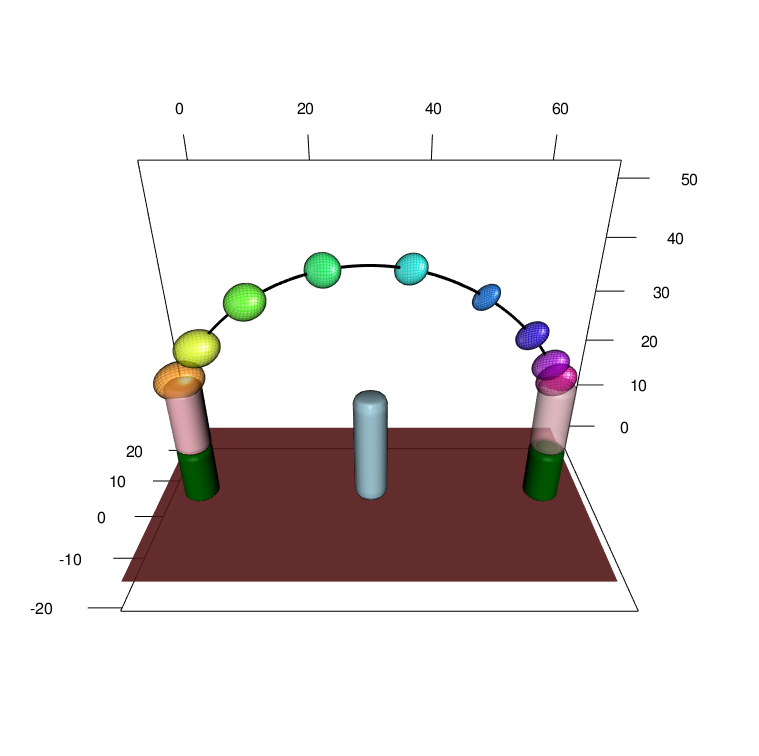}
	
	\includegraphics[width = 0.44\textwidth]{img/person.pdf}
	\caption{Estimated experiment-specific curve (black) and participant-specific curves for the experimental set-up with small obstacle 30 cm from starting position (left) and estimated 95\% predictions ellipsoids for the systematic amplitude effect in the same set-up (right). The ellipsoids are displayed temporally equidistant around the mean trajectory for the experimental set-up. 
	}\label{3d-cov1} 
\end{figure}

%
\begin{figure}[!tp]
	\centering
	\includegraphics[width=0.85\textwidth]{./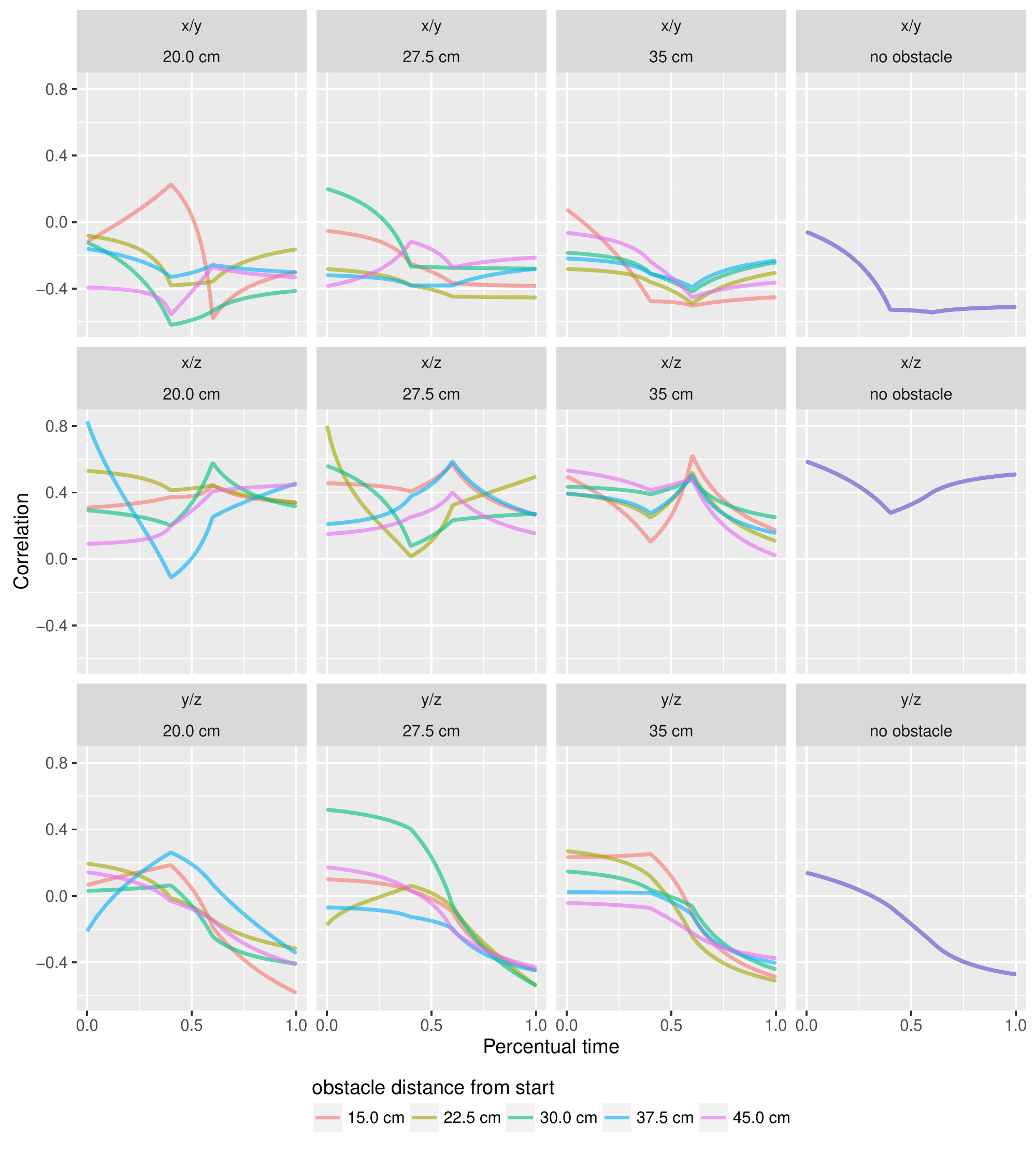} 
	\caption{Correlation functions over time as estimated by the proposed model in all 16 experimental set-ups.}
	\label{move-res-corr1} 
\end{figure}
Of particular interest is the correlation for the three axes (i.e. $x/y$, $x/z$ and $y/z$) and how it varies over time as seen in Figure~\ref{move-res-corr1}. From the results, it is clear that the correlations  vary over time, which Figure~\ref{3d-cov1} also illustrates. 
The variation of correlation with respect to time is moderate for the $x/y$- and $x/z$-correlations, but for the $y/z$-correlations there is a clear trend for all experimental set-ups that the correlation goes from positive values to negative values. This is a surprising and perhaps unexpected feature since all experimental set-ups are symmetric in the $y$-coordinate. 
A plausible explanation is that lifting a centrally placed object with the right hand is generally associated with moving that hand to the right (in our set-up, a positive $y$-value). When the object is raised  we observe a positive correlation in the $y/z$-plane (faster initial movement timing amplifies the effect), and when the object is lowered again we observe corresponding negative correlation.

\paragraph{Classification}\label{sec:classification}
To objectively compare different models, one can fit the models to a subset of the samples and compare their fits in terms of their classification accuracies of participant on the remaining data. That is, for a given functional sample that was not used to fit the model, we wish to determine which of the participants performed the movement. The primary objective of such an exercise is to compare similar generative models, but not as such to get the highest possible classification accuracy---a higher score could probably be achievable by standard machine learning methods that would reveal little about the structure of the problem. A similar classification-based approach was used to evaluate the hierarchical ``pavpop'' model described in \cite{RaketGrimme2016}, which was applied to the 1-dimensional acceleration magnitude profiles of the 3-dimensional arm movement data set.

The present classification was done in a chronological 5-fold cross-validation set-up (first fold consisted of the two first repetitions for each person, second fold of the third and fourth and so forth). Different models were fitted on the five training sets, each leaving out one of the folds (test set). For each test set, the samples were classified using the model estimates from the corresponding training set. The classification accuracy was then computed as the average classification accuracy across the five folds for each experiment. 

In the following, the proposed method is denoted by SIMM (Simultaneous Inference for Misaligned Multivariate curves). The following models were used in the comparison:

\begin{description}
	\item[Nearest centroid (NC)] The centroids for each person were estimated as the pointwise means in the training set. The classification was done using minimal Euclidean distance to the estimated centroid (using linear interpolation). 
	\item[Nearest centroid weighted (NC-W)] The centroids were computed similarly to the NC method, but the classification was done using a distance with weighted coordinates, the weights for the $x$-, $y$- and $z$-coordinates were $0.1/0.7/0.2$.
	\item[Fisher-Rao $L^2$ (FR-$L^2$)] Pointwise template functions were estimated using group-wise elastic function alignment and PCA extraction for modeling amplitude variation \citep{tucker2013generative, fdasrvf}. The standard setting of using 3 principal components was used. The elastic curve approach for functional data is widely considered the state-of-the-art framework for handling misaligned functional data \citep{marron2015functional}.  The template functions were estimated separately for each of the three value coordinates of the trajectories. Classification was done using minimal Euclidean distance to the estimated template functions. 
	\item[Fisher-Rao elastic (FR$_{\text{E}}$)] Template functions were estimated similarly to FR-$L^2$, but classification was done using an elastic distance that both measures coordinate-wise distances as a sum of phase \citep[Section 3.1]{tucker2013generative} and amplitude directions  \citep[Definition 1]{tucker2013generative}. The weighting between phase and amplitude distances was $0.16/0.84$.
	\item[Fisher-Rao elastic weighted (FR$_{\text{E}}$-W)] Template functions and classification was done similarly to FR$_{\text{E}}$, except that we include a weighting of the three elastic distances corresponding to each value coordinate. The weighting between phase and amplitude distances was $0.14/0.86$ and the weights for the $x$-, $y$- and $z$-components of the elastic distance were $0.3/0.2/0.5$.
	\item[Elastic curve metric (EM)]
		Multivariate elastic distance  between curves is defined as geodesic distance on $L^2([0,1]; \R^3)/\Gamma$, where $\Gamma$ is the closure of the set of positive diffeomorphisms on $[0,1]$. In the quotient space $L^2([0,1]; \R^3)/\Gamma$, all temporal features are removed and comparison of curves is done using only their image in $\R^3$, but in a way that is consistent with reparametrizations of the original curves \citep{srivastava_bog}. Templates were estimated as the pointwise averages of samples aligned to the Karcher mean in  $L^2([0,1]; \R^3)/\Gamma$ computed using the \texttt{fdasrvf} R-package \citep{fdasrvf}.
		Classification was done using a weighted sum of multivariate elastic distance and phase distance (defined as for the FR$_{\text{E}}$ method). The weighting between elastic and phase distances was $0.24/0.76$.
	\item[SIMM] The person-specific templates are estimated using the proposed model with a diagonal cross-covariance structure (i.e. no cross-covariance). Classification is done using nearest posterior distance under the maximum likelihood estimates as a function of the unknown sample.
	\item[SIMM-CC] Estimation and classification are done similarly to the SIMM method, but using the full dynamic cross-covariance structure described in the previous sections.
\end{description}

All weights described in the above methods were chosen by cross-validation on the accuracies for the three experimental set-ups with $d=30.0$ cm. The grids used for determining the parameters are given in the supplementary material.

The classification accuracies are available in Table~\ref{table:combined}. If we first consider the NC-type methods that do not model any warping effect, we see a marked increase in accuracy when weighting the different coordinates in the classification, and thus emulating a constant diagonal cross-covariance structure. If we consider the basic elastic model FR-$L^2$ based on the Fisher-Rao metric, we see similar results to the simple NC model, even though the  FR-$L^2$ method also accounts for a warping effect when estimating the template. When classifying using an elastic distance, as was done in FR$_{\text{E}}$, we see a great increase in classification accuracy. The phase distance contributes considerably to these improvements. When only considering elastic amplitude distance (i.e. weighting phase/amplitude distances 0/1) the average classification accuracy is 0.576. Taking the deformation distance into account in the classification, and thus paying a price for warping the templates, we see a great increase in classification accuracy. The heuristic idea of having to pay a price for large warps in many ways emulates the proposed idea of modeling the warping functions as random effects. Finally, the FR$_{\text{E}}$-W method includes a weighting of the combined phase and amplitude distances across the $x$-, $y$- and $z$-coordinates of the observed trajectories, which again increases the accuracy. 

The elastic metric has many similarities with the Fisher-Rao metric, but is multivariate in nature. The EM method has higher accuracies than the similar FR$_{\text{E}}$ and FR$_{\text{E}}$-W methods. Exploratory comparison of results suggested that this was caused by more appropriate warping across all coordinates leading to both better estimates of templates and in turn more accurate phase distances.

The SIMM model is the proposed model described above, but without a dynamic cross-correlation structure. Instead we have three scale parameters that describe the weighting of the marginal variances in the three value coordinates. The model is thus both comparable to FR$_{\text{E}}$-W and EM, both of which are outperformed in terms of accuracy. It is important to note that while FR$_{\text{E}}$-W and EM required cross-validation on a subset of the test data to estimate the parameters, the SIMM model estimates all variance parameters used in the weighting of the different aspect of the movement from the training data. The final model, SIMM-CC, includes a full dynamic cross-covariance structure. Even though one could anticipate that this model was much more prone to overfitting to the training data (the model includes 27 free amplitude variance parameters compared to the 6 parameters of the SIMM model), we see a slight increase in accuracy of the method. 
	We remark that the EM, SIMM and SIMM-CC methods, which make a joint warp of the three spatial coordinates, had the best accuracies among the methods in consideration. This strongly supports the idea of modeling multivariate signals with a joint warping of all value coordinates.

\begin{table}[!htp]
	\centering
	\resizebox{\textwidth}{!}{
		\begin{tabular}{cc|cccccccccccc}
			\hline
			$d$& obstacle & NC & NC-W & FR-$L^2$  & FR$_{\text{E}}$ & FR$_{\text{E}}$-W & EM & SIMM  & SIMM-CC\\
			\hline\hline
			& \emph{S} &0.62 & 0.71 & 0.58 &  0.77 & 0.79 & 0.77 & 0.80 & \textbf{0.85} \\ 
			15.0 cm & \emph{M} & 0.60 & 0.63 & 0.62 & 0.64 & 0.68 & 0.77 & 0.80 & \textbf{0.83}\\
			& \emph{T} & 0.52 & 0.57 &0.54 & 0.58 & 0.58 & 0.77 & \textbf{0.84} &  {0.81}\\
			\hline
			& \emph{S} & 0.51 & 0.58 & 0.50 & 0.68 & 0.66 & \textbf{0.77}  & 0.69   & \textbf{0.77}\\ 
			22.5 cm & \emph{M} & 0.52 & 0.64 & 0.56 & 0.62 & {0.73} & 0.70 & \textbf{0.75} &  {0.72}\\
			& \emph{T} & 0.50 & 0.62 & 0.49 & 0.64 & 0.73 & 0.73  & 0.74 &  \textbf{0.79}\\ 
			\hline
			& \emph{S} & 0.53 & \emph{0.59} & 0.53 & \emph{0.69} & \emph{0.72} & \emph{\textbf{0.76}} & 0.70 & {\textbf{0.76}} \\ 
			30.0 cm  & \emph{M} & 0.45 & \emph{0.47} & 0.48 & \emph{0.65} & \emph{0.68} & \emph{0.70} & \textbf{0.79} & {0.75}\\ 
			& \emph{T} & 0.58 & \emph{0.63} & 0.56 & \emph{0.65} & \emph{0.73} & \emph{0.78} & \textbf{0.86} & {0.83}\\ 
			\hline
			& \emph{S} & 0.51 & 0.55 & 0.52 & 0.67 & 0.72 & 0.70 & \textbf{0.77} & {0.76}\\
			37.5 cm  & \emph{M} & 0.45 & 0.50 & 0.43 & 0.68 & 0.65 & 0.69 & \textbf{0.68} & \textbf{0.68} \\
			& \emph{T} &  0.50 & 0.53 & 0.54 & 0.67 & 0.73 & 0.72  & \textbf{0.80} & \textbf{0.80}\\
			\hline
			& \emph{S} & 0.49 & 0.54 & 0.51 & 0.66 & 0.71 & 0.75 & 0.69 & \textbf{0.76}\\ 
			45.0 cm  & \emph{M} & 0.48 & 0.53 & 0.44 & 0.66 & 0.70 & 0.71 & \textbf{0.78} & {0.73}\\ 
			& \emph{T} & 0.50 & 0.54 & 0.50 & 0.71 & 0.75 & 0.74 & {0.82} & \textbf{0.83}\\ 
			\hline
			NA  & - & 0.48 & 0.56 & 0.52 & 0.68 & {0.72} & \textbf{0.80} & 0.64 & 0.70\\
			\hline\hline
			average & & 0.515 & 0.574 &  0.520 & 0.666 & 0.705 & 0.741 & 0.761 &  \bf{0.773}\\
			\hline
		\end{tabular} 
	} 
	
	\caption{Classification accuracies of various methods. {\bf Bold} indicates best result(s), \emph{italic} indicates that the given experiments were used for training.}\label{table:combined}
\end{table}

\section{Discussion}
In this paper we have proposed a new class of models for simultaneous inference for misaligned multivariate functional data. We fitted these types of models to three different data sets and applied it in one classification scenario. 

The idea behind the approach is to simultaneously model the predominant effects in functional data sets, misalignment and amplitude variation, as random effects. The simultaneous modeling allows separation of these effects in a data-driven manner, namely by maximum likelihood estimation. In particular, we saw that this separation resulted in nicely behaving warping functions that did not seem to over-align the functional samples. 

The models enable 
estimation of dynamic correlation functions between the individual coordinates of the amplitude variation. We demonstrated that one can achieve superior fits and better classification using the parametric construction from Proposition~\ref{prop-dyn-cov}, even when the number of free parameters is high relative to the number of functional samples. By fitting the model to two large functional data sets related to human movement, we also demonstrated the computational feasibility of maximum likelihood inference with such models.

The proposed parametric model class for dynamic covariance structures is very general, but other modeling approaches could be better suited in some situations.
For example, instead of using a fixed number of parameters to describe each marginal variance and cross-covariance function, one would often prefer to do this in a data-driven manner. One possibility could be to model the multivariate amplitude covariance function using a multivariate functional factor analysis model, for example a multivariate extension of the rank reduced model of \cite{james2000principal}, where the number of parameters describing the covariance is fixed, and the covariance is described in terms of functional principal components. However, such amplitude effects cannot be effectively fitted using conventional optimizers for the likelihood, and would require development of specialized efficient fitting methods (e.g. generalizing the methods of \citeauthor{peng2012geometric} \citeyear{peng2012geometric}).  
Another relevant approach would be simultaneous warping of fixed effects and amplitude variation, and one could also consider extending the domain of feasible warping functions by modelling the latent warp variables $w$ as more general functional objects (e.g. stochastic processes) instead of elements belonging to $\R^{m_{\bw}}$ for some $m_{\bw}$. We will leave these extensions as future work.

\bigskip
\begin{center}
	{\large\bf SUPPLEMENTARY MATERIAL}
\end{center}

\subsubsection*{Cross-validation grids} The cross-validation used to determine the parameters of the methods NC-W, FR$_{\text{E}}$, FR$_{\text{E}}$-W and EM in Section~\ref{sec:classification} were given as follows. The possible weights between the three value coordinates were $\{\bw\in \R^3 \,:\, w_i \in \{0, 0.1, \dots, 1\}, w_1+w_2+w_3=1\}$ and the possible weights between amplitude and phase distance were $\{\bw\in \R^2 \,:\, w_i \in \{0, 0.02, \dots, 1\}, w_1+w_2=1\}$. NC-W only uses weighting between value coordinates and FR$_{\text{E}}$ and EM only use weighting between the amplitude and phase distance. \\
For the SIMM-CC model we explored adding more than three knots to the warp model ($m_{\bw} = 3,4,5$), but $m_{\bw} = 3$ gave the best cross-validation score.

\subsubsection*{Covariance functions}

Below we list the covariance functions that are used in the three data examples. 

Schur's theorem states that the pointwise product of covariance functions yields a valid covariance function \citep{schur1911}. This property is used in the arm movement example.

\begin{description}
	\item[Brownian bridge] The covariance function for the Brownian bridge defined on the temporal domain $[0,1]$ is given by: 
	\begin{equation} \label{cov:bridge}
	f_\text{bridge}(s,t) = \tau^2 \min(s,t) \cdot (1- \max(s,t)) = \tau^2(\min(s,t) - st), \quad s,t \in [0,1].
	\end{equation}
	where $\tau > 0$ is a scale parameter.
	\item[Brownian motion] The covariance function for the Brownian motion defined on the domain $[0,\infty)$ is given by:
	\begin{equation} \label{cov:motion}
	f_\text{motion}(s,t) = \tau^2 \min(s,t), \quad s,t \geq 0.
	\end{equation}
	where $\tau > 0 $ is a scale parameter.
	\item[Mixing stationary and bridge covariances] 
	The combination of a stationary and bridge covariance with mixtures $a$ and $b$ is given by
	\begin{equation*} 
	f_{\text{mixture}(a,b)}(s,t) = a + b \cdot  (\min(s,t) - st)
	\end{equation*}
	In our analysis the parameter $b$ is redundant, so we use 
	\begin{equation} \label{cov:mixture}
	f_{\text{mixture}(a)}(s,t) = a + \min(s,t) - st
	\end{equation}	
	
	Note that the bridge covariance is not the same construction as when conditioning a stochastic process $X$ on its endpoint value. 
	\item[Mat\'{e}rn covariance function]  The covariance function for the Mat\'{e}rn covariance with smoothness parameter $\alpha$ and range parameter $\kappa$ is given by: 
	\begin{equation} \label{cov:matern}
	f_\text{Mat\'{e}rn($\alpha$,$\kappa$)}(s,t) = \frac{2^{1-\alpha}}{\Gamma(\alpha)} (|s-t|/\kappa )^\alpha K_\alpha(|s-t|/\kappa), \quad s,t\in \R.
	\end{equation}
	Here $K_\alpha$ is the modified Bessel function of the second kind.  A Gaussian process with Mat\'{e}rn covariance is stationary, and conversely any stationary continuous Gaussian process with mean zero has a covariance function that up to scale is given by a Mat\'{e}rn covariance function \citep{RasmussenWilliams}.
\end{description}

\newpage

\subsubsection*{Parameter estimates for Arm movement data}

\begin{table}[!h]
	\centering
	\begin{tabular}{cc | ccccc}
		\hline
		$d$& obstacle & $\sigma$ & $\alpha$ & $\kappa$ & $a$ & $ \sigma \tau$  \\
		\hline\hline
		& \emph{S} &0.0012 & 1.432 & 0.157 & 19.56 & 0.0519 \\
		15.0 cm & \emph{M} & 0.0012 &1.749 & 0.120 & 24.60 & 0.0525 \\
		& \emph{T} & 0.0013 & 1.627 & 0.124 & 22.54 &  0.0502\\
		\hline
		& \emph{S} & 0.0013 & 1.788 & 0.128 & 25.13 & 0.0531  \\
		22.5 cm & \emph{M} &0.0011 & 1.638 & 0.139 & 58.20 & 0.1177\\
		& \emph{T} & 0.0012 & 1.679 & 0.121 & 26.37 & 0.0528 \\
		\hline
		& \emph{S} &0.0012 & 1.773 & 0.121 & 20.96 & 0.0549 \\ 
		30.0 cm  & \emph{M} &0.0014 & 1.663 & 0.139 & 21.31 & 0.0518 \\ 
		& \emph{T} & 0.0012 & 1.687 & 0.128 & 26.63 & 0.0643 \\ 
		\hline
		& \emph{S} &  0.0012 & 1.481 & 0.155 & 17.69 & 0.0622 \\ 
		37.5 cm  & \emph{M} &0.0013 & 1.658 & 0.125 & 19.80 & 0.0596 \\ 
		& \emph{T} &  0.0010 & 1.633 & 0.121 & 34.29 & 0.0563 \\ 
		\hline
		& \emph{S} & 0.0013 & 1.761 & 0.123 & 19.10 & 0.0504 \\ 
		45.0 cm  & \emph{M} & 0.0016 & 1.760 & 0.119 & 13.09 & 0.0668 \\ 
		& \emph{T} &0.0010 & 1.670 & 0.121 & 37.46 & 0.0548 \\ 
		\hline
		NA  & - & 0.0009 & 1.786 & 0.142 & 47.04 & 0.0561 \\ 
		\hline
	\end{tabular}
	\caption{Parameter estimates for the arm movement data. 
	}\label{table:suppl_param}
\end{table}

\subsection*{EM algorithm for the spline coefficients in the linearized model}

First note that by assumption the mean curves $\btheta$ are the same, expect for warping, for trajectories belonging to the same subject groups and are independent of other subject groups.
Thus, in order to simplify notation and ease argumentation, we will assume that all trajectories belong to the same subject group.

Let $f = \{f_k \}$ be the spline base function for $\btheta$ and let $\bc$ be the spline coefficients, i.e. $\btheta(t) = f(t) \cdot \bc$. Consider the linearized model from Equation \eqref{lin-mod}:
\begin{equation*}
\vec{\by}_n \approx \vec{\bgamma}_{\bw_n^0} +  Z_n(\bw_n - \bw_n^0) + \vec{\bx}_n + \vec{\bepsilon}_n, \quad n=1,\dotsc,N
\end{equation*}
with log-likelihood
\begin{equation*} 
\sum_{n=1}^N \bigg( q m_n \log \sigma^2 + \log\det V_n \\
+ \sigma^{-2}(\vec{\by}_n -  \vec{\bgamma}_{\bw_n^0} + Z_n \bw_n^0)^\top V_n^{-1} (\vec{y}_n -  \vec{\bgamma}_{\bw_n^0} +  Z_n \bw_n^0 ) \bigg).
\end{equation*}
For the remainder we assume that $\bw^0_n = \{ \bw^0_{nl} \}_{l=1}^{m_w}$ and all variance parameters $(\mathcal{S}, C, \sigma^2)$ are fixed, and that we have a current estimate of the spline coefficients, $\bc_0$. The conditional expectation and variance of $\bw_n$ given the observations $\by$ under the current parameters will be denoted by $\bar{\bw}_n = \{ \bar{\bw}_{nl} \}_{l=1}^{m_w} \in \R^{m_w}$ and $\bar{\bar{\bw}}_n = \{\bar{\bar{\bw}}_{n l_1 l_2} \}_{l_1,l_2=1}^{m_w} \in \R^{m_w \times m_w}$, respectively. Using this notation the conditional log-likelihood of $\vec{\by}_n$ given $\bw_n$ is
\begin{equation*}
l_{\vec{\by}_n|\bw_n} = (\vec{\by}_n - \vec{\bgamma}_{\bw_n^0}  - Z_n(\bw_n-\bw^0_n) )^\top S_n^{-1}  (\vec{\by}_n - \vec{\bgamma}_{\bw_n^0} - Z_n(\bw_n-\bw^0_n) ) + \log\det S_n.
\end{equation*}
The term $\log\det S_n$ does not influence the estimation of $\bc$, and hence it will be removed in the following. The conditional expectation $E[l_{\vec{\by}_n|\bw_n} | \vec{\by}_n]$ given the observation hence equals
\begin{equation} \label{exp-l-y}
(\vec{\by}_n - \vec{\bgamma}_{\bw_n^0} - Z(\bar{\bw}_n-\bw^0_n) )^\top S_n^{-1} (\vec{\by}_n - \vec{\bgamma}_{\bw_n^0} - Z(\bar{\bw}_n-\bw^0_n)) + \tr(S_n^{-1} Z_n \bar{\bar{\bw}}_n Z_n^\top ).
\end{equation}
Defining $R_n = f(v(t_k, \bw^0_n))$ and $R_{nl} = \partial_t f(v(t_k, \bw^0_n)) \partial_{\bw_l} v(t_k, \bw_n^0)$ for $l = 1,\dotsc,m_w$ we have that $Z_n = \{ 
R_{nl} \cdot \bc \}_{l=1}^{m_w}$ and thus $Z_n \bw_n = (\sum_{l=1}^{m_w} \bw_{nl}R_{nl})\cdot \bc $. Using this the trace from \eqref{exp-l-y} can be expanded as a double sum
\begin{equation*}
\tr(S_n^{-1} Z \bar{\bar{\bw}}_n Z^\top) = \sum_{l_1,l_2 = 1}^{m_w} \bar{\bar{\bw}}_{n l_1 l_2} \tr \left( S_n^{-1} R_{nl_1} \bc\bc^\top R_{nl_2}^\top \right) .
\end{equation*}

Calculating the gradient of \eqref{exp-l-y} now gives that $\nabla_{\bc} E[l_{\vec{\by}_n|\bw_n} | \vec{\by}_n]$ is proportional to
\begin{equation*}
-K_n^\top S_n^{-1}  (\vec{\by}_n - K_n \bc) + 
\sum_{l_1,l_2 = 1}^{m_w} \bar{\bar{\bw}}_{n l_1 l_2}  R_{nl_2}^\top S_n^{-1} R_{nl_1} \bc,
\end{equation*}
where $K_n = R_n +  \sum_{l=1}^{m_w} (\bar{\bw}_{nl} - \bw^0_{nl}) R_{nl}$. From this it follows that the M-step of the EM algorithm for the spline coefficients $\bc$ is given by
\begin{equation*}
\bc_{new} = \left[\sum_{n=1}^N K_n^\top S_n^{-1}K_n  + \sum_{l_1,l_2 = 1}^{m_w}  \bar{\bar{\bw}}_{n l_1 l_2}  R_{nl_1} S_n^{-1} R_{nl_2}^\top \right]^{-1} \sum_{n=1}^N K_n^\top S_n^{-1} y_n.
\end{equation*}

	\bibliographystyle{agsm}
	
	\bibliography{bib}
\end{document}